\newtheorem{theorem}{Theorem}
\newtheorem{corollary}[theorem]{Corollary}
\newtheorem{lemma}[theorem]{Lemma}
\theoremstyle{remark}
\newtheorem*{remark}{Remark}
\newcommand{\ea}[1]{\begin{align}#1\end{align}}
\newcommand{\eat}[2]{\begin{alignat}{#1} #2\end{alignat}}
\newcommand{\nn}{\nonumber \\}
\newcommand{\la}{\langle}
\newcommand{\ra}{\rangle}
\newcommand{\mcl}{\mathcal}
\newcommand{\mbf}{\mathbf}
\newcommand{\mbb}{\mathbb}
\newcommand{\bmt}{\begin{pmatrix}}
\newcommand{\emt}{\end{pmatrix}}
\newcommand{\vpu}[1]{^{\vphantom{#1}}}
\DeclareMathOperator{\Tr}{Tr}
\DeclareMathOperator{\rnk}{Rank}
\DeclareMathOperator{\GL}{GL}
\DeclareMathOperator{\Gal}{Gal}
\DeclareMathOperator{\rel}{Re}
\DeclareMathOperator{\sgn}{Sign}
\begin{document}

\title{Tight frames, Hadamard Matrices and Zauner's Conjecture}



\author{${}^1$Marcus Appleby}
\address{${}^1$Centre for Engineered Quantum Systems, School of Physics, The University of
Sydney, Sydney, NSW 2006, Australia}

\author{${}^2$Ingemar Bengtsson}
\address{${}^2$Stockholms Universitet, AlbaNova, Fysikum, S-106 91 Stockholm, Sweden}

\author{${}^{1,3}$Steven Flammia}
\address{${}^3$Yale Quantum Institute, Yale University, New Haven, CT 06520, USA}

\author{${}^4$Dardo Goyeneche}
\address{${}^4$Departamento de F\'{i}sica, Facultad de Ciencias B\'{a}sicas, Universidad de Antofagasta, Casilla 170, Antofagasta, Chile}


\begin{abstract}
We show that naturally associated to a SIC (symmetric informationally complete positive operator valued measure or SIC-POVM) in dimension $d$ there are a number of higher dimensional structures:  
specifically a projector, a complex Hadamard matrix in dimension $d^2$ and a pair of ETFs (equiangular tight frames) in dimensions $d(d\pm 1)/2$.   We also show that a WH (Weyl Heisenberg covariant) SIC in odd dimension $d$ is naturally associated to a pair of symmetric tight fusion frames in dimension $d$.   We deduce two  relaxations of the WH SIC existence problem.  We also find a reformulation of the  problem in which the number of equations is fewer than the number of variables. Finally, we show that in at least four cases the structures associated to a SIC  lie on continuous manifolds of such structures. In two of these cases the manifolds are non-linear. Restricted defect calculations are consistent with this being true for the structures associated to every known SIC with $d$ between $3$ and $16$, suggesting it may be true for all $d\ge 3$.

\end{abstract}

\maketitle
\allowdisplaybreaks

\section{Introduction}
\label{sec:Intro}
Symmetric informationally complete measurements~\cite{Zauner,Renes} (SIC-POVMs, or SICs as we call them here) are  of much interest to the quantum information and quantum foundations communities~\cite{FHS}.   As a result they have almost become a specialism in itself.  In this paper we restore the balance somewhat by relating them to the larger context of frame theory in general. 

A SIC in dimension $d$ is a collection of $d^2$ unit complex vectors $|\psi_{1}\ra$, \dots, $|\psi_{d^2}\ra$ with the property
\ea{
\big| \la \psi_{j} |\psi_{k}\ra \big|^2 &= \frac{1}{d+1}
\label{eq:SICoverlaps}
}
for all $j\ne k$.  With the single exception of the Hoggar lines~\cite{Hoggar} in dimension 8, every known SIC is covariant with respect to the Weyl-Heisenberg (or WH) group.  WH SICs have been constructed numerically\cite{Zauner,Renes,Scott, Andrew, FHS,Markus2,Markus3} in every dimension up to $181$, and in many other dimensions up to $2208$, while solutions have been constructed~\cite{Zauner,Renes,Scott,Marcus,Markus1,ACFW,Markus2,Markus3,Gene2} in every dimension up to $21$ and in many  
other dimensions up to $1299$ (in both cases, numerical and exact, this listing includes solutions still unpublished).  This lends support to Zauner's conjecture\footnote{This is Zauner's conjecture in its weakest form.  Starting with Zauner himself various stronger versions have been proposed.}:  that SICs exist in every finite dimension.  However, it remains an open question whether that is actually the case.

One of the most striking features singling out SICs from frame theory in general is a remarkable connection with some of the central results and conjectures of algebraic number theory~\cite{AYAZ,AFMY1,AFMY2,Gene2,Gene1}.  The results reported here were originally discovered through exploring that connection.  Specifically, we were first led to Theorem~\ref{cor:condes} as a result of an investigation of the number-theoretical relationships between SICs in different dimensions~\cite{ABDF}.  We were then led to our other main results by working back from that (in other words the order of discovery was the reverse of the order of presentation).   However, having said that let us stress that number theory plays no direct role in this paper.

The common thread which unites the two strands, number-theoretic and geometric, is the role of the squared overlap phases $e^{2i\theta_{jk}}$, where
\ea{
e^{i\theta_{jk}} &= \sqrt{d+1}\la \psi_j |\psi_k\ra
} 
for $j\neq k$.  Since the numbers $e^{i\theta_{jk}}$  are essentially just the elements of the Gram matrix it is obvious that they must play a central role in the description of a SIC.  What is rather less obvious is the fact that their squares appear to be of some independent importance.  Number-theoretically the significance of the squared phases is, firstly the connection with the Stark conjectures discussed in ref.~\cite{Gene2}, and secondly  the fact that they tie together a SIC in dimension $d$ with one in dimension $d(d-2)$, as described in ref.~\cite{ABDF}.    Their geometrical significance comes from the fact that, for any positive integer $t$, the $t^{\rm{th}}$ Hadamard power of the Gram matrix of a projective $t$-design is simply related to a unitary~\cite{Datta}.  Specializing to the case of a  SIC this means that the numbers $e^{2i\theta_{jk}}$ are proportional to the elements of a complex Hadamard matrix.  The purpose of this paper is to tease out some of the  consequences of that fact.  

After some preliminaries in Section~\ref{sec:Prelim} we begin in Section~\ref{sec:associated1} by giving a    simple proof of the fact  that, naturally constructible from the squared overlap phases of a   SIC in dimension $d$ (any SIC, not necessarily a WH SIC), there exist
\begin{enumerate}
\item a rank $d(d+1)/2$ projector in dimension $d^2$,
\item a complex Hadamard matrix in dimension $d^2$,
\item a pair of equiangular tight frames in dimensions $d(d\pm 1)/2$.
\end{enumerate}
The first construction was central to the argument in ref.~\cite{ABDF}, the second is a specialization of the construction in ref.~\cite{Datta}, and the third was originally given by Waldron~\cite{Waldron}.  What is new here is the simplicity of the proofs and the fact that we relate the constructions, both to each other, and to the number theoretic considerations of refs.~\cite{AFMY1,AFMY2,Gene1,Gene2}.

  In Section~\ref{sec:assoc2} we prove a result which is  specific to WH SICs:  namely, that every such SIC is naturally associated to two  tight fusion frames~\cite{Casazza1,Casazza2}.  Associated to these frames is a system of equations which are different from the standard defining equations of WH SIC and which may possibly provide a useful relaxation of the existence problem.   
  
In Section~\ref{sec:naimark} we consider  the Naimark complement of a WH SIC.  This  leads to another relaxation of the existence problem.  It also leads to a reformulation of the existence problem  for which 
\ea{
\lim_{d\to \infty}\left(\frac{\text{number of equations}}{\text{number of real variables}}\right) =1
}
 (unlike the usual formulation, where the limit is infinite).

Finally, in Section~\ref{sec:contfam} we present evidence that the projector, complex Hadamard matrix and equiangular tight frames constructed in  Section~\ref{sec:associated1}, and the tight fusion frames constructed in Section~\ref{sec:assoc2},  lie on continuous manifolds of such objects.  This is in  contrast to the situation with SICs which, aside from the case $d=3$, turn out to be isolated in every case examined~\cite{Scott, Andrew,Markus2,Markus3,Bruzda}.  We give an exact construction of continuous families associated to SIC fiducial orbits $4a$, $6a$, $8b$ (using the Scott-Grassl naming convention~\cite{Scott}) in the main text.  In Appendix~\ref{Ap1} we supplement this result by calculating the restricted defect for the equiangular tight frames in every dimension up to $d=16$ and showing that aside from the case $d=2$, it is always non-zero.  This encourages the conjecture that the structures constructed in this paper lie on continuous manifolds for every value of $d$ greater than $2$.

\section{Preliminaries} 
\label{sec:Prelim}

In finite dimensional Hilbert spaces a frame can be defined as a finite set of vectors 
that span the space. A frame is said to be unit-norm if it consists entirely of unit 
vectors. 
A unit-norm frame is said to be tight if and only if it forms a resolution of the identity,   
\ea{
\sum_{j=1}^n |\psi_j\ra \la\psi_j| &= \frac{n}{d} \boldsymbol{1}_d
 \label{tf} 
}
where $n$ is the number of vectors and $d$ is the dimension. Up to normalization a tight frame is the same thing as a rank one POVM. Finally, an equiangular tight frame (ETF) is defined to be a  unit-norm tight frame for which  $|\la \psi_j |\psi_k\ra| = c$ for some fixed constant $c$ and all $j\neq k$.  Squaring both sides of Eq.~\eqref{tf} and taking the trace we see that
\ea{
c&= \sqrt{\frac{n-d}{d(n-1)}}.
\label{values}
}
A simple example of a real ETF is the frame defined by the vertices of a regular tetrahedron.  A less easily visualized example is the six vectors obtained by taking one of each pair of diametrically opposite vertices of a regular icosahedron.  

The fact that an ETF is a spanning set means it must contain at least  $d$ vectors.    It is also easily seen that the number of vectors cannot exceed $d^2$.  Indeed, suppose it did.  Then the projectors $|\psi_j\rangle \langle \psi_j|$ would have to be linearly 
dependent in the sense that $\sum_{j} \lambda_j |\psi_j\ra \la \psi_j| =0$ for some 
$\lambda_j$ not all zero.  Multiplying both sides by $|\psi_k\ra \la \psi_k|$  and  taking the trace gives
\ea{
(1-c^2) \lambda_k + c^2  \sum_{j=1}^n\lambda_j &=0 
}
where $c$ is the number specified by Eq.~\eqref{values}.  Since this has to be true for all $k$, and since $c^2 \neq 1$, it would follow that the $\lambda_j$ are all zero, contrary to assumption.
A SIC is  defined to be an ETF for which  
 the number of vectors equals the maximum value $d^2$.

For every unit-norm frame  $|\psi_1\ra, \dots , |\psi_n\ra$ in $d$ dimensions, and every positive integer $t$, there holds the Welch bound~\cite{Welch}
\ea{
\binom{d+t-1}{t} \sum_{j,k=1}^n \big| \la \psi_j |\psi_k\ra\big|^{2t} \ge n^2
}
This inequality will play a central role in the sequel.  It can be shown~\cite{KR} that a unit-norm  frame saturates the Welch bound if and only if it is a complex projective $t$-design (from which it follows that if the bound is saturated for one value of $t$, it is saturated for all smaller values).  In particular, a tight frame is a complex projective $1$-design, and a SIC is a complex projective $2$-design. 

As mentioned in Section~\ref{sec:Intro} all but one of the known SICs is covariant with respect to the WH group.  To describe this group we use the conventions of ref.~\cite{Marcus}.  Let $|0\ra, \dots, |d-1\ra$ be an orthonormal basis for $d$ dimensional Hilbert space $\mcl{H}_d$ and for each $\mbf{p}=(p_1,p_2) \in \mbb{Z}^2$ define the displacement operator $D_{\mbf{p}}$ by
\ea{
D_{\mbf{p}} |j\ra &= \tau^{p_2(p_1+2j)}|j+p_1\ra, & \tau&=-e^{\frac{\pi i}{d}}
\label{eq:dopprops}
}
where addition of ket labels is mod $d$. 
 Note that in  even dimensions $\tau^d=-1$, which means   $D_{\mbf{p}}= \pm D_{\mbf{p}'}$ when $\mbf{p}=\mbf{p}'$ mod $d$, but that equality is not guaranteed unless $\mbf{p}=\mbf{p}'$ mod $2d$.  In all dimensions, even or odd, $\tau$ satisfies the identity
 \ea{
 \sum_{\mbf{p}} \tau^{\la \mbf{p},\mbf{q}\ra} &= d^2\delta_{\mbf{q},\boldsymbol{0}}
 }
 where $\la \cdot, \cdot \ra$ is the symplectic form defined by $\la \mbf{p}, \mbf{q}\ra = p_2q_1-p_1q_2$.  
 It is easily seen that the displacement operators satisfy
\ea{
D^{\dagger}_{\mbf{p}}&=D^{\vphantom{\dagger}}_{-\mbf{p}} & D_{\mbf{p}} D_{\mbf{q}} &= \tau^{\la \mbf{p},\mbf{q}\ra } D_{\mbf{p}+\mbf{q}}
\label{eq:whdopprops}
} 
These facts, together with the fact that $\Tr(D_{\mbf{p}}) = d\delta_{\mbf{p},\boldsymbol{0}}$, mean that the set $\{d^{-\frac{1}{2}} D_{\mbf{p}} \colon p_1, p_2 = 0, \dots d-1\}$ is an orthonormal basis for operator space relative to the Hilbert-Schmidt inner product \cite{Schwinger}.  Consequently an arbitrary operator $A$ can be uniquely expanded
\ea{
A & = \sum_{p_1,p_2=0}^{d-1} a_{\mbf{p}} D_{\mbf{p}} ,  & a_{\mbf{p}} &= \frac{1}{d} \Tr(D_{-\mbf{p}} A).
\label{eq:aexpn}
}
We now define a WH SIC  to be  the set  
$\{D_{\mbf{p}}|\psi \ra \colon p_1,p_2=0, \dots, d-1\}$ where $|\psi\ra$, the fiducial vector, is any vector satisfying the $d^2$ equations
\ea{
\big| \la \psi |D_{\mbf{p}}|\psi\ra \big|^2&=
 \begin{cases}
 \frac{1}{d+1} \qquad & \mbf{p} \neq \boldsymbol{0} \mod d
 \\
 1  \qquad & \mbf{p} = \boldsymbol{0} \mod d
\end{cases}
\label{eq:sicconds}
}

To a considerable extent we can replace the study of frames with the study of their Gram matrices.  Recall that the Gram matrix of a set of $n$ vectors $|\psi_k\ra$ is defined to be the $n\times n$ matrix $G$ whose matrix elements are the inner products:
\ea{
G_{jk} &= \la \psi_j |\psi_k\ra.
}
We now  turn this round, and characterize the systems of vectors  under considerations in terms of their Gram matrices.  In the first place, the necessary and sufficient condition for an $n\times n$ Hermitian matrix $G$, to be the Gram matrix of a $d$-dimensional frame  is that it be positive semi-definite and rank $d$.  The necessary and sufficient condition for the frame to be tight is that $(d/n) G$ be a projector.  The frame is, in addition, equiangular if and only if the diagonal elements of $G$ are all $1$, and the off-diagonal elements all have the same absolute value.  Finally, two different frames have the same Gram matrix if and only if they are unitarily equivalent.  The use of these facts is that they mean that, instead of having to look at an entire orbit under the unitary group, one can focus on the single matrix\footnote{Although the Gram matrix is invariant under the replacement $|\psi_r\ra \to U|\psi_r\ra$, where $U$ is any unitary, it is not invariant under arbitrary rephasings $|\psi_r \ra \to e^{i\theta_r} |\psi_r\ra$. As discussed in ref.~\cite{AFF}, the  triple products
 \ea{
 T_{rst} &= G_{rs} G_{st} G_{tr},
 }
provide a characterization which is invariant with respect to both unitary transformations and rephasing.
 } $G$.
 
Let $G$ be the Gram matrix  of an ETF, and define
\ea{
\tilde{G} &= \frac{n}{n-d} \boldsymbol{1}_n - \frac{d}{n-d} G
}
The fact that $(d/n) G$ is a rank $d$ projector means that  $((n-d)/n) \tilde{G}$ is a rank $(n-d)$ projector.  So $\tilde{G}$ is the Gram matrix for an $(n-d)$ dimensional frame, called the Naimark complement~\cite{NaimComp,HL} of $G$ (we say ``the complement'' to save words, although there is, of course, a whole unitary  orbit of complements).  

Finally, we will need the concept of a tight fusion frame~\cite{Casazza1,Casazza2}. In the language of quantum information theory, this is a POVM of arbitrary rank. Instead of thinking of a tight frame as a set of vectors, one can  identify it with the rank 1 projectors appearing in the sum on the left hand side of Eq.~\eqref{tf}.  A tight fusion frame is the generalization of this to a set of projectors   $\Pi_1$,\dots, $\Pi_n$, not necessarily rank 1, which sum to a multiple of the identity.  We will say the frame is a symmetric tight fusion frame (STFF) if the projectors all have the same rank and if $\Tr(\Pi_j \Pi_k) =c$ for some fixed constant $c$ and all $j\neq k$ (in the literature~\cite{FJMW,KingSTFF,KPCL} these structures are referred to as equichordal tight fusion frames,  or  equidistant tight fusion frames; we use a different term because we wish to stress the analogy with a SIC).  A symmetric tight fusion frame is thus a generalization of an ETF.  Unlike an ETF, there is no lower bound on the number of elements of a STFF (the single element set consisting just of the identity is such a frame).  However, the same argument which shows that the number of vectors in an ETF cannot exceed $d^2$ also applies to a STFF.  Up to  normalization a STFF for which $n$ achieves its maximum value of $d^2$ is a particular instance of what in ref.~\cite{Graydon} is called a SIM (or symmetric informationally complete measurement of arbitrary rank).  It is also a conical 2-design~\cite{Graydon}.  It should, however, be  observed that whereas every SIC is an ETF, not every SIM is an STFF.  In the sequel we will have occasion to consider WH covariant STFFs.  These are constructed in the same way as WH covariant ETFs, by choosing a fiducial projector $\Pi$ and then transforming it to obtain the set $\{\Pi_{\mbf{p}} \colon p_1,p_2=0,\dots, d-1\}$.

\section{The Hadamard and Equiangular Tight Frames Associated to a SIC}
\label{sec:associated1}
In this section we show that associated to a SIC in dimension $d$ there always exists
\begin{enumerate}
\item a rank $d(d+1)/2$ projector $Q$ in dimension $d^2$,
\item a complex Hermitian Hadamard matrix $H$ in dimension $d^2$,
\item two ETFs $E$ and $\tilde{E}$ in dimensions $d(d+1)/2$ and $d(d-1)/2$.
\end{enumerate}
  As we noted in the Introduction these objects were originally discovered in the course of the number-theoretic investigations described in 
ref.~\cite{ABDF}, but in this section we do not even use the Weyl--Heisenberg group. We only use 
the definition of a SIC as given in Eq. (\ref{eq:SICoverlaps}). 
\begin{theorem}\label{tm:main}
Let $G$ be the Gram matrix of a SIC in dimension $d$, and let $G^{(2)} = G\circ G$ be its Hadamard square.  Then 
\ea{
Q &= \frac{d+1}{2d} G^{(2)}
\label{eq:Qdef}
}
is a rank $d(d+1)/2$ projector in dimension $d^2$.
\end{theorem}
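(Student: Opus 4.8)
The plan is to recognise $G^{(2)}$ as the Gram matrix of a naturally associated tight frame living in the symmetric subspace of $\mcl{H}_d\otimes\mcl{H}_d$, and then invoke the Gram-matrix characterisation of tight frames recorded in Section~\ref{sec:Prelim}. First I would introduce the tensor-squared vectors $|\Psi_j\ra = |\psi_j\ra\otimes|\psi_j\ra$ for $j=1,\dots,d^2$. Since $\la\Psi_j|\Psi_k\ra = \la\psi_j|\psi_k\ra^{2} = (G\circ G)_{jk}$, the matrix $G^{(2)}$ is precisely the Gram matrix of the family $\{|\Psi_j\ra\}$; in particular $G^{(2)}$ is automatically Hermitian and positive semi-definite (Hermiticity is also immediate because complex conjugation commutes with squaring, so $(G\circ G)_{kj}=\overline{(G\circ G)_{jk}}$). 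Each $|\Psi_j\ra$ is a symmetric product vector, so the whole family lies inside $\mathrm{Sym}^2(\mcl{H}_d)$, the symmetric subspace of $\mcl{H}_d\otimes\mcl{H}_d$, whose dimension is $\binom{d+1}{2}=d(d+1)/2$.

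The second step exploits the fact, recalled in Section~\ref{sec:Prelim}, that a SIC is a complex projective $2$-design. In operator form this says $\sum_{j=1}^{d^2}\bigl(|\psi_j\ra\la\psi_j|\bigr)^{\otimes 2} = \tfrac{d}{d+1}\bigl(\boldsymbol{1}_{d^2}+F\bigr)$, where $F$ is the swap operator on $\mcl{H}_d\otimes\mcl{H}_d$; the overall constant is fixed by taking the trace and using $\Tr F=d$. Since $\boldsymbol{1}_{d^2}+F = 2\Pi_{\mathrm{sym}}$, with $\Pi_{\mathrm{sym}}$ the orthogonal projector onto $\mathrm{Sym}^2(\mcl{H}_d)$, this reads $\sum_{j}|\Psi_j\ra\la\Psi_j| = \tfrac{2d}{d+1}\Pi_{\mathrm{sym}}$. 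Thus $\{|\Psi_j\ra\}$ is a tight frame of $n=d^2$ vectors for the $d(d+1)/2$-dimensional space $\mathrm{Sym}^2(\mcl{H}_d)$; in particular this forces the family to span $\mathrm{Sym}^2(\mcl{H}_d)$.

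For the last step I apply the criterion of Section~\ref{sec:Prelim}: a Hermitian matrix is the Gram matrix of a frame in dimension $d'$ if and only if it is positive semi-definite of rank $d'$, and the frame is tight if and only if $(d'/n)$ times the Gram matrix is a projector. Here $G^{(2)}$ is the Gram matrix of a tight frame with $d'=d(d+1)/2$ and $n=d^2$, so $\frac{d(d+1)/2}{d^2}\,G^{(2)}=\frac{d+1}{2d}\,G^{(2)}=Q$ is a projector of rank $d(d+1)/2$, as claimed. Equivalently, and more explicitly, one can check $Q^2=Q$ by the short computation $(G^{(2)})^2_{jl}=\sum_k\la\Psi_j|\Psi_k\ra\la\Psi_k|\Psi_l\ra = \la\Psi_j|\bigl(\tfrac{2d}{d+1}\Pi_{\mathrm{sym}}\bigr)|\Psi_l\ra = \tfrac{2d}{d+1}\la\Psi_j|\Psi_l\ra = \tfrac{2d}{d+1}\,(G^{(2)})_{jl}$ (using $\Pi_{\mathrm{sym}}|\Psi_l\ra=|\Psi_l\ra$), and then read off $\rnk Q=\Tr Q = \tfrac{d+1}{2d}\sum_j (G_{jj})^2 = \tfrac{d+1}{2d}\cdot d^2 = d(d+1)/2$.

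The computation is short and I do not expect a genuinely hard step. The two points that need a little care are: (i) $G^{(2)}=G\circ G$ means the entrywise square $G_{jk}^2$, not $|G_{jk}|^2$, which is exactly what makes it the Gram matrix of the $|\Psi_j\ra$ rather than of some rephased family; and (ii) one must check that the $2$-design identity yields a tight-frame condition for the symmetric subspace specifically — the $|\Psi_j\ra$ are very far from tight in all of $\mcl{H}_d\otimes\mcl{H}_d$, since there are $d^2$ of them in a space of dimension $d^2$ yet they are confined to a subspace of dimension $d(d+1)/2$. The real content of the theorem is the simple observation that Hadamard-squaring the Gram matrix corresponds to tensor-squaring the vectors.
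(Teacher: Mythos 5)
Your proof is correct, but it takes a genuinely different route from the one in the paper. You identify $G^{(2)}$ as the Gram matrix of the tensor-squared vectors $|\Psi_j\ra=|\psi_j\ra\otimes|\psi_j\ra$, which live in $\mathrm{Sym}^2(\mcl{H}_d)$, and then feed the $2$-design identity $\sum_j|\Psi_j\ra\la\Psi_j|=\tfrac{2d}{d+1}\Pi_{\mathrm{sym}}$ into the tight-frame criterion; the identity $Q^2=Q$ then drops out of $\Pi_{\mathrm{sym}}|\Psi_l\ra=|\Psi_l\ra$. The paper instead gives a free-standing argument that re-derives the $t=2$ Welch bound: it bounds $\rnk(G^{(2)})$ by $d(d+1)/2$ by expanding $G^{(2)}$ as a sum of $d(d+1)/2$ rank-one operators built from Hadamard products of the rows of the generator matrix, applies the Cauchy--Schwarz-type inequality $\sum\lambda_j^2\ge(\sum\lambda_j)^2/r$, and observes that a SIC saturates the bound, forcing all nonzero eigenvalues of $G^{(2)}$ to equal $2d/(d+1)$. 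The two arguments rest on the same underlying fact (Welch-bound saturation at $t=2$ is equivalent to the operator-form $2$-design identity, as recalled in Section~\ref{sec:Prelim} via ref.~\cite{KR}), but yours is shorter and makes the geometric content transparent --- Hadamard-squaring the Gram matrix is tensor-squaring the vectors --- at the cost of importing the $2$-design operator identity as a black box, whereas the paper's version is self-contained and exhibits the theorem as a by-product of a proof of the Welch bound, which is the point its accompanying remark emphasizes. The only place where you should be slightly more careful is the passage from the paper's definition of a $2$-design (saturation of the Welch inequality) to the operator identity with the stated constant; this is standard (e.g.\ by checking that the Hermitian operator $\sum_j(|\psi_j\ra\la\psi_j|)^{\otimes 2}-\tfrac{d}{d+1}(\boldsymbol{1}_{d^2}+F)$ has vanishing Hilbert--Schmidt norm, using only the SIC overlap conditions), but it is the one step you invoke rather than prove.
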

\begin{remark}
The symbol $\circ$ stands for the 
entrywise Hadamard (or Schur) product. $G^{(2)}$ is thus the matrix obtained by squaring all the matrix elements of $G$.  The theorem is a specialization of results   in refs.~\cite{Belovs,Datta}.  However, we feel there is some merit to giving a free-standing proof.  We particularly wish to draw attention to the fact that the theorem has a subtle relationship with the Welch bound in that, although it is  not immediately and obviously implied, it falls out of a suitable proof of the latter  as a kind of side benefit. 
\end{remark}
\begin{proof}
Let $|\psi_1\ra, \dots , |\psi_n\ra$ be any unit-norm frame in $\mbb{C}^d$ (not necessarily a SIC),  let $G$ be its Gram matrix, and let $G^{(2)}$ be the Hadamard square of $G$. 
$G^{(2)}$ is a Hermitian matrix of trace $n$, and has real eigenvalues 
that we denote by $\lambda_j$. We see that 
\begin{equation} \sum_{j=1}^n\lambda_j^2 = \Tr \left(G^{(2)}G^{(2)}\right) = 
\sum_{j,k}|\langle \psi_j|\psi_k\rangle |^4 \ . \end{equation} 
\noindent Now we recall the inequality 
\begin{equation} \lambda_1^2 + \dots + \lambda_r^2 
\geq \frac{1}{r}(\lambda_1 + \dots + \lambda_r)^2 \ . \label{olik1} \end{equation}

\noindent It is saturated if and only if all the $\lambda_j$ are equal to each 
other. Only the non-zero eigenvalues count, so we set $r = \rnk \left(G^{(2)}\right)$. 
Then we deduce that 

\begin{equation} \sum_{j,k}|\langle \psi_j|\psi_k\rangle |^{4} \geq 
\frac{\left(\Tr\left( G^{(2)}\right)\right)^2}{\rnk(G^{(2)})} = 
\frac{n^2}{\rnk\left(G^{(2)}\right)}\ . 
\label{Welchprel} \end{equation}  

\noindent It remains to find an upper bound on the rank of the matrix $G^{(2)}$. 
This is done by introducing the generator matrix $X$ whose $n$ columns are the 
vectors $|\psi_j\rangle$ we started out with. Then we reinterpret $X$ as a 
matrix consisting of $d$ row vectors $\langle w_j|$ in $\mbb{C}^n$, with 
components $w^{*}_{ja} = \la w_j | a\ra = \la j |\psi_a\ra$. We will soon be considering Hadamard products of 
such vectors, with components $(w_j \circ w_k)_a = w_{ja}w_{ka}$. The 
Gram matrix can be written as 

\begin{equation} G = X^\dagger X = \sum_{j=1}^d|w_j\rangle \langle w_j| \ . 
\end{equation}

\noindent Its Hadamard square has the matrix elements 

\begin{equation} (G^{(2)})_{ab} = \left(w^{\vphantom{*}}_{1a}w^*_{1b} + w^{\vphantom{*}}_{2a}w^*_{2b} + 
\dots + w^{\vphantom{*}}_{da}w^*_{db}\right)^2 \ . \end{equation} 

\noindent Expanding this out we see that  $G^{(2)}$  is the sum of $d(d+1)/2$ rank 1 operators, from which it follows that its rank cannot exceed $d(d+1)/2$.   
Using this fact in (\ref{Welchprel}) gives the Welch bound when 
$t = 2$. (The proof of the Welch bound for arbitrary $t$ is similar \cite{Belovs}.) 
Thus the $t = 2$ Welch bound is saturated if and only if $G^{(2)}$ is a matrix 
of rank $d(d+1)/2$ all of whose non-zero eigenvalues are equal. When 
$\Tr\left(G^{(2)}\right) = d^2$ they are in fact equal to $2d/(d+1)$. Theorem 1 follows, because 
we also know that the Welch bound is saturated by the $d^2$ vectors in a SIC.  
\end{proof}
This theorem is the key result on which the rest of this paper depends.  Specifically, it means
\begin{enumerate}
\item \label{it:had} Starting from a SIC in dimension $d$ one can construct an order $d^2$ complex Hadamard matrix  (Corollary~\ref{cor:had} in this section).
\item Starting from a SIC in dimension $d$ one can construct a pair of ETF's in dimensions $d(d\pm 1)/2$ (Corollary~\ref{cor:ETF} in this section).
\item Starting from a WH covariant SIC in odd dimension $d$ one can construct a pair of rank $(d\pm 1)/2$  STFF's in dimension $d$ (Theorem~\ref{cor:condes} in Section~\ref{sec:assoc2}).
\end{enumerate}
In connection with~\eqref{it:had}, recall that an order $n$ complex Hadamard matrix is an $n\times n$ unitary matrix all of whose elements have the same absolute value (necessarily equal to $1/\sqrt{n}$).

\begin{corollary}\label{cor:had}
Let $G$ be the Gram matrix of a SIC in dimension $d$, and let $Q$ be the projector defined in Eq.~\eqref{eq:Qdef}.    Then
\ea{
H = 2 Q - \boldsymbol{1}_{d^2}.
}
is a complex Hermitian Hadamard matrix with positive diagonal and $\Tr(H) = d$.  
\end{corollary}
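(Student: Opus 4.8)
The plan is to read off every required property of $H$ directly from Theorem~\ref{tm:main} and the defining overlap relations \eqref{eq:SICoverlaps} of a SIC, so that no new machinery is needed beyond what has already been proved.

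First I would establish that $H$ is Hermitian and unitary. Since $Q$ is a projector it is Hermitian, hence so is $H = 2Q - \boldsymbol{1}_{d^2}$. Moreover, using $Q^2 = Q$,
\[
H^2 = (2Q - \boldsymbol{1}_{d^2})^2 = 4Q^2 - 4Q + \boldsymbol{1}_{d^2} = \boldsymbol{1}_{d^2},
\]
so that $H^\dagger H = H^2 = \boldsymbol{1}_{d^2}$ and $H$ is unitary. (Equivalently, $H$ is the reflection through the range of $Q$; its eigenvalues are $+1$ with multiplicity $d(d+1)/2$ and $-1$ with multiplicity $d(d-1)/2$.)

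Next I would compute the moduli of the entries. From \eqref{eq:Qdef} one has $H = \frac{d+1}{d} G^{(2)} - \boldsymbol{1}_{d^2}$. The diagonal entries of $G$ equal $1$ because the SIC vectors are unit vectors, so the diagonal entries of $G^{(2)}$ equal $1$, giving $H_{jj} = \frac{d+1}{d} - 1 = \frac{1}{d} > 0$. For $j \neq k$ one has $(G^{(2)})_{jk} = (G_{jk})^2$, whose modulus is $|G_{jk}|^2 = |\la \psi_j | \psi_k \ra|^2 = \frac{1}{d+1}$ by \eqref{eq:SICoverlaps}; hence $|H_{jk}| = \frac{d+1}{d}\cdot\frac{1}{d+1} = \frac{1}{d}$. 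Thus every entry of the $d^2 \times d^2$ matrix $H$ has modulus $1/d = 1/\sqrt{d^2}$, which together with unitarity is exactly the definition of an order $d^2$ complex Hadamard matrix. Finally $\Tr(H) = 2\Tr(Q) - d^2 = 2\cdot\frac{d(d+1)}{2} - d^2 = d$.

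There is no substantial obstacle here; the argument is a direct computation once Theorem~\ref{tm:main} is in hand. The only point deserving a moment's care is the distinction between the entrywise square $(G^{(2)})_{jk} = (G_{jk})^2$, which is what enters $Q$, and the quantity $|G_{jk}|^2$ that the SIC condition controls; but since we only ever need $|H_{jk}|$, and $|(G_{jk})^2| = |G_{jk}|^2$, the two coincide for our purposes.
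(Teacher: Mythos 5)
Your proof is correct and follows essentially the same route as the paper's: Hermitian unitarity from $Q$ being a projector, constant modulus $1/d$ of the entries from the SIC overlap condition, positivity of the diagonal, and the trace from $\Tr(Q)=d(d+1)/2$. You simply spell out the computations that the paper leaves implicit, including the worthwhile remark that $|(G_{jk})^2|=|G_{jk}|^2$ is all that is needed.
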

\begin{proof} The fact that $Q$ is a projector implies that $H$ is a Hermitian unitary.  The definition of $Q$ in terms of the Gram matrix of a SIC means that the matrix elements of $H$ all have the same absolute value, and that the ones on the diagonal are all positive equal to $1/d$.  The fact that $\Tr(H)=d$ is immediate. 
\end{proof} 

In spite of much interest the classification problem for complex Hadamard matrices  remains unsolved~\cite{TZ,Feri2}.  It is, however, known that 
complex Hadamard matrices exist in every dimension (the Fourier matrix being an obvious example). Moreover continuous families of Hadamard matrices are known to exist in many dimensions. A continuous family is said to be affine if the variable phases occurring in the matrix elements are either free, or affine functions of free phases~\cite{TZ}. Affine families exist for instance in all composite dimensions. What is more relevant for our purposes is that a small number of continuous \emph{non}-affine families are known. See ref.~\cite{Karlsson} for an example that will figure briefly below.  In the following we will add a few more examples to the list.
\begin{corollary}\label{cor:ETF}
 Let $G$ be the Gram matrix of a  SIC in dimension $d$, and let $Q$ be the projector defined in Eq.~\eqref{eq:Qdef}.    Then
 \ea{
E &= \frac{2d}{d+1}Q,
\\
\tilde{E} &= \frac{2d}{d-1}\left(\boldsymbol{1}_{d^2}-Q\right)
}
are the Gram matrices of a pair of Naimark complementary ETFs in dimension $d(d+1)/2$, $d(d-1)/2$ respectively.
\end{corollary}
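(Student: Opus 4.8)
The plan is to reduce everything to the Gram-matrix characterization of ETFs recalled in Section~\ref{sec:Prelim}: an $n\times n$ Hermitian matrix $F$ is the Gram matrix of an ETF of $n$ unit vectors in $\mathbb{C}^{d'}$ exactly when $F$ is positive semidefinite of rank $d'$, the rescaling $(d'/n)F$ is a projector, all diagonal entries of $F$ equal $1$, and all off-diagonal entries of $F$ have a common modulus. Here $n = d^2$, so once we know from Theorem~\ref{tm:main} that $Q$ is a projector of rank $d(d+1)/2$, the argument is just bookkeeping with the normalization constants.

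First I would treat $E = \tfrac{2d}{d+1}Q$. Since $Q$ is a projector, $E$ is positive semidefinite of rank $d' = d(d+1)/2$, so it is the Gram matrix of \emph{some} frame of $d^2$ vectors in dimension $d(d+1)/2$. Tightness is immediate because $(d'/n)E = \tfrac{d(d+1)/2}{d^2}\cdot\tfrac{2d}{d+1}\,Q = Q$, which is a projector. For the entrywise conditions I would use the identity $E_{jk} = (G^{(2)})_{jk} = \la \psi_j|\psi_k\ra^2$: the diagonal entries are $|\la\psi_j|\psi_j\ra|^2 = 1$, and by the SIC condition~\eqref{eq:SICoverlaps} the off-diagonal entries all have modulus $1/(d+1)$. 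Hence $E$ is the Gram matrix of an ETF in dimension $d(d+1)/2$; one can check the constant $c$ of Eq.~\eqref{values} comes out to $1/(d+1)$, as a consistency check.

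Then I would obtain $\tilde{E}$ as the Naimark complement of $E$. Substituting $n = d^2$, $d' = d(d+1)/2$ and $n - d' = d(d-1)/2$ into the Naimark formula $\tilde{G} = \tfrac{n}{n-d'}\boldsymbol{1}_n - \tfrac{d'}{n-d'}G$ from Section~\ref{sec:Prelim} gives $\tfrac{2d}{d-1}\boldsymbol{1}_{d^2} - \tfrac{d+1}{d-1}E$, and inserting $E = \tfrac{2d}{d+1}Q$ collapses this to $\tfrac{2d}{d-1}(\boldsymbol{1}_{d^2} - Q) = \tilde{E}$. Since the Naimark complement of an ETF is again an ETF (as recorded in Section~\ref{sec:Prelim}), $\tilde{E}$ is the Gram matrix of an ETF in dimension $n - d' = d(d-1)/2$, Naimark complementary to $E$ by construction. (Alternatively, one can verify $\tilde{E}$ from scratch against the same four criteria, using that $\boldsymbol{1}_{d^2} - Q$ is a projector of rank $d(d-1)/2$ and that $\tilde{E}_{jk} = -\tfrac{d+1}{d-1}\la\psi_j|\psi_k\ra^2$.)

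There is no real obstacle here: the only point requiring care is matching the normalizations so that the rescaled matrices are genuine projectors and the diagonals are exactly $1$, and the one mild subtlety is that $\tilde{E}$ has been presented in the form $\boldsymbol{1}_{d^2} - Q$ rather than in the Naimark-formula form, so the two expressions must be reconciled. Everything else follows at once from Theorem~\ref{tm:main} and the defining property of a SIC.
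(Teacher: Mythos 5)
Your proof is correct and follows essentially the same route as the paper's: both arguments read off tightness from the fact that $Q$ is a rank $d(d+1)/2$ projector (Theorem~\ref{tm:main}), equiangularity from the unit diagonal and constant off-diagonal modulus of $G^{(2)}$, and then obtain $\tilde{E}$ as the Naimark complement of $E$. Your explicit reconciliation of $\tfrac{2d}{d-1}(\boldsymbol{1}_{d^2}-Q)$ with the Naimark formula, and the consistency check $c=1/(d+1)$, are details the paper leaves implicit but add nothing new in substance.
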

\begin{remark}
The existence of these  ETFs was first proved by Waldron \cite{Waldron}.
\end{remark}
\begin{proof} As discussed in Section~\ref{sec:Prelim}, the fact that  $Q$ is a rank $d(d+1)/2$ projector means that $E$ is the Gram matrix of a tight frame in 
$\mbb{C}^{\frac{d(d+1)}{2}}$.  The fact that the diagonal elements are all equal to 1, and the off-diagonal elements all have the same modulus, means 
the frame is equiangular.  The fact that $\tilde{E}$ is  the Naimark complement of $E$ means it is the 
Gram matrix of an ETF in $\mbb{C}^{\frac{d(d-1)}{2}}$.  
\end{proof}

It is known that complex Hermitian Hadamard matrices with constant diagonal can be 
be used to construct ETFs~\cite{Holmes, Feri,Turek}. One noteworthy construction uses complex Hadamard matrices containing roots of unity only~\cite{Elwood}. ETF's of the particular kind we just encountered can also be constructed using combinatorial ideas and are then called Steiner ETF's~\cite{Mixon}. 
\subsection{Further remarks:  the special case $d=3$.}
In Corollary~\ref{cor:had} we saw that a SIC in dimension $d$ naturally gives rise to a $d^2 \times d^2$ complex Hermitian Hadamard matrix with positive diagonal.   In the special case $d=3$  the connection is two-way: 
\begin{theorem}\label{tm:turek}
Let $H$ be an arbitrary $9\times 9$ complex Hermitian Hadamard matrix with positive diagonal, and let
\ea{
G &= \frac{3}{2}(\boldsymbol{1}_9-H).
}
Then $G$ is the Gram matrix of a SIC in dimension $3$. 
\end{theorem}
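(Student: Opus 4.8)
The plan is to verify directly that $G$ meets the Gram-matrix characterisation of a SIC recorded in Section~\ref{sec:Prelim}: that it is Hermitian, positive semi-definite, of rank $3$, with $(3/9)G$ a projector, with unit diagonal and constant off-diagonal modulus, and with $n = 9 = 3^2$ vectors. Three of these are immediate. $G$ is Hermitian because $H$ is. Since $H$ is a complex Hadamard matrix all of whose entries have modulus $1/3$, and its diagonal is positive, the diagonal entries of $H$ all equal $1/3$, so $G_{jj} = \tfrac{3}{2}(1 - \tfrac{1}{3}) = 1$. And for $j \neq k$ we have $|G_{jk}| = \tfrac{3}{2}|H_{jk}| = \tfrac{1}{2}$, which is exactly $1/\sqrt{d+1}$ with $d = 3$, so $G$ has constant off-diagonal modulus of the required size.

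The substantive point is spectral. Being Hermitian and (as a complex Hadamard matrix) unitary, $H$ satisfies $H^2 = H H^\dagger = \boldsymbol{1}_9$, so every eigenvalue of $H$ is $\pm 1$. Let $p$ of them equal $+1$ and the remaining $9 - p$ equal $-1$; then $\Tr(H) = 2p - 9$. But also $\Tr(H) = \sum_j H_{jj} = 9 \cdot \tfrac{1}{3} = 3$, which forces $p = 6$. Hence $\boldsymbol{1}_9 - H$ has eigenvalue $0$ with multiplicity $6$ and eigenvalue $2$ with multiplicity $3$, so $G = \tfrac{3}{2}(\boldsymbol{1}_9 - H)$ is positive semi-definite of rank $3$, and $\tfrac{1}{3}G$ has eigenvalues $0$ (multiplicity $6$) and $1$ (multiplicity $3$), i.e.\ is a rank-$3$ projector.

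Assembling these facts through the criteria of Section~\ref{sec:Prelim}: $G$ positive semi-definite of rank $3$ makes it the Gram matrix of a frame of $9$ unit vectors in $\mbb{C}^3$; $\tfrac{1}{3}G = (d/n)G$ being a projector makes the frame tight; and unit diagonal together with constant off-diagonal modulus makes it equiangular. An equiangular tight frame of $9 = 3^2$ vectors in $\mbb{C}^3$ is by definition a SIC, which is the claim.

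There is no real obstacle here, since every step is forced by the hypotheses; the one place the argument has genuine content is the trace computation, where Hermiticity, unitarity, the positive diagonal, and the order being exactly $9$ must all cooperate to yield the integer split $p = 6$. Had $\Tr(H)$ come out incongruent to $9$ modulo $2$, no such $H$ could exist, so it is worth remarking that the numerology works out only because $d = 3$ is special (the analogous construction in general dimension would need $\Tr(H) = d$ to split $d^2$ as $\tfrac{d(d+1)}{2} + \tfrac{d(d-1)}{2}$, which it does, but the converse direction — that every such $H$ gives a SIC — is not expected to hold for $d > 3$).
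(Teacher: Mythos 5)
Your proof is correct and complete. Note, however, that the paper does not actually prove this theorem at all: it simply cites ref.~\cite{Turek} (Goyeneche and Turek), so your argument is not so much a different route as the only route on display. What you supply is a short, self-contained spectral argument: with the paper's normalisation a $9\times 9$ complex Hadamard matrix has entries of modulus $1/3$, so positivity of the diagonal forces $H_{jj}=1/3$ and $\Tr(H)=3$; Hermiticity plus unitarity give $H^2=\boldsymbol{1}_9$, so the $\pm 1$ eigenvalue multiplicities are pinned down as $6$ and $3$; hence $\tfrac{1}{3}G$ is a rank-$3$ projector, and the unit diagonal and constant off-diagonal modulus $1/2=1/\sqrt{d+1}$ finish the job via the Gram-matrix characterisation of an ETF with $n=d^2$. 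Every step is forced, and this is exactly the content needed. Your closing remark about why $d=3$ is special could be sharpened: the real obstruction for $d>3$ is that $\tfrac{1}{2}(\boldsymbol{1}_{d^2}-H)$ has rank $d(d-1)/2$, which equals $d$ only when $d=3$; for larger $d$ the same construction yields the ETF $\tilde{E}$ of Corollary~\ref{cor:ETF} in dimension $d(d-1)/2$ rather than a SIC. But that is a side comment and does not affect the validity of the proof.
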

\begin{proof}
See ref.~\cite{Turek}.
\end{proof}
Combined with the results proved above this implies
\begin{corollary}
Let $H$ be an arbitrary $9\times 9$ complex Hermitian Hadamard matrix with positive diagonal, and let $H^{(2)}$ be its Hadamard square.  Then $3H^{(2)}$ is another complex Hermitian Hadamard matrix with positive diagonal.
\end{corollary}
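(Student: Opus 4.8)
The plan is to read the corollary as a closed loop through the results already established. Start from an arbitrary $9\times 9$ complex Hermitian Hadamard matrix $H$ with positive diagonal. Since $H$ is Hadamard of order $9$ all of its entries have modulus $1/3$, and since it is Hermitian with positive diagonal this forces $H_{ii}=1/3$ for every $i$. By Theorem~\ref{tm:turek} the matrix $G=\tfrac{3}{2}(\boldsymbol{1}_9-H)$ is the Gram matrix of a SIC in dimension $3$, so Theorem~\ref{tm:main} applies with $d=3$: the matrix $Q=\tfrac{2}{3}G^{(2)}$ is a rank $6$ projector in dimension $9$, and Corollary~\ref{cor:had} then says $\hat H=2Q-\boldsymbol{1}_9$ is a complex Hermitian Hadamard matrix with positive diagonal and trace $3$. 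The entire content of the corollary is therefore the bookkeeping identity $\hat H = 3H^{(2)}$.

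Next I would carry out that bookkeeping entrywise. Because $\boldsymbol{1}_9$ is diagonal, the Hadamard square of $G=\tfrac{3}{2}(\boldsymbol{1}_9-H)$ agrees with $\tfrac{9}{4}H^{(2)}$ off the diagonal, while on the diagonal each entry equals $\tfrac{9}{4}(1-\tfrac13)^2=1$ in place of $\tfrac{9}{4}\cdot\tfrac19=\tfrac14$. Hence
\[
G^{(2)}=\frac{9}{4}\Bigl(H^{(2)}+\tfrac13\boldsymbol{1}_9\Bigr).
\]
Substituting into $Q=\tfrac{2}{3}G^{(2)}$ gives $Q=\tfrac{3}{2}H^{(2)}+\tfrac12\boldsymbol{1}_9$, and therefore $\hat H=2Q-\boldsymbol{1}_9=3H^{(2)}$. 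Assembling the pieces, $3H^{(2)}=\hat H$ is a complex Hermitian Hadamard matrix with positive diagonal by Corollary~\ref{cor:had}, which is exactly the assertion; one may note in passing that $\Tr(3H^{(2)})=\Tr(\hat H)=3$, consistent with $H^{(2)}$ having all diagonal entries equal to $1/9$.

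As for difficulty, there is essentially no obstacle here beyond keeping the scalar factors and the diagonal correction in the Hadamard square straight. The only point requiring a word of justification is that Hermiticity together with a positive diagonal pins the diagonal of $H$ at $1/3$, which is precisely what makes the diagonal of $G^{(2)}$ come out to the value $1$ needed for the identification above; everything else is a direct concatenation of Theorem~\ref{tm:turek}, Theorem~\ref{tm:main} and Corollary~\ref{cor:had}.
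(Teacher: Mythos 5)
Your proof is correct and follows the paper's own route exactly: apply Theorem~\ref{tm:turek} to get a SIC Gram matrix $G=\tfrac{3}{2}(\boldsymbol{1}_9-H)$, then Theorem~\ref{tm:main} and Corollary~\ref{cor:had} to get a Hadamard matrix $\tfrac{4}{3}G^{(2)}-\boldsymbol{1}_9$, and finally verify this equals $3H^{(2)}$. The only difference is that you spell out the entrywise bookkeeping (including the diagonal correction forced by $H_{ii}=1/3$) that the paper leaves implicit.
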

\begin{proof}
It follows from Theorem~\ref{tm:turek} that $G=3(\boldsymbol{1}_9-H)/2$ is the Gram matrix of a SIC in dimension 3.  In view of  Theorem~\ref{tm:main} and Corollary~\ref{cor:had} this means
\ea{
\tilde{H}&= \frac{4}{3} G^{(2)} -\boldsymbol{1}_{9} = 3H^{(2)}
}
is a complex Hermitian Hadamard matrix with positive diagonal.
\end{proof}
\section{Symmetric Tight Fusion Frames Associated to a SIC}
\label{sec:assoc2}
In the previous section nothing was assumed about the SIC, except that it is a SIC.  In this section we specialize to the case of a WH SIC in odd dimension $d$.  We show that the results proved in Section~\ref{sec:associated1} mean that the existence of such a SIC  implies the existence of two symmetric tight fusion frames (STFFs) in the same dimension $d$.

 Let $|\psi\ra$ be a fiducial vector for the SIC, and let $\Pi = |\psi\ra \la \psi|$ be the corresponding rank 1 projector.  
 Then, using the expansion of Eq.~\eqref{eq:aexpn}, we find
\ea{
\Pi & = \frac{1}{d} \boldsymbol{1}_d +\frac{1}{d\sqrt{d+1}} \sum_{\mbf{p}\neq 0} e^{i\theta_{\mbf{p}}} D_{\mbf{p}}
}
where
\ea{
e^{i\theta_{\mbf{p}}} =\begin{cases}
1 \qquad & \mbf{p} = \boldsymbol{0} \mod d
\\
 \sqrt{d+1} \Tr(D^{\dagger}_{\mbf{p}}\Pi) \qquad & \text{otherwise}
 \end{cases}
\label{eq:phsedef}
}
We will show that, if we  replace  the phases $e^{i\theta_{\mbf{p}}}$ in this formula with their squares, and if we make a few other adjustments, then we obtain the fiducial projectors for  two STFFs.  The proof relies on the following Lemma.
\begin{lemma} \label{lm:stffcond}
In odd dimension $d$ let 
\ea{
A &= \frac{1}{d} \sum_{p_1,p_2=0}^{d-1} e^{i\phi_{\mbf{p}}} D_{\mbf{p}}
}
where the phases $e^{i\phi_{\mbf{p}}}$ are such that $e^{i\phi_0} = 1$, $e^{i\phi_{-\mbf{p}}} = e^{-i\phi_{\mbf{p}}}$ for all $\mbf{p}$,  $e^{i\phi_{\mbf{p}}} = e^{i\phi_{\mbf{q}}}$ for all $\mbf{p} = \mbf{q}$ (mod $d$),  and
\ea{
\sum_{\mbf{u}} \tau^{\la \mbf{u}, \mbf{p}\ra} e^{i(\phi_{\mbf{u}}+\phi_{\mbf{p}-\mbf{u}})} &= d^2\delta^{(d)}_{\mbf{p},\boldsymbol{0}},
\label{eq:Acond}
}
the quantity $\delta^{(d)}_{\mbf{p},\boldsymbol{0}}$ being unity if $\mbf{p}=\boldsymbol{0}$ (mod $d$) and zero otherwise.  Then
\ea{
\Pi^{\pm} & = \frac{1}{2} \left( \boldsymbol{1}_d \pm A \right)
\label{eq:pipmdf}
}
are the fiducial projectors for a pair of WH covariant STFFs of rank $(d\pm 1)/2$.
\end{lemma}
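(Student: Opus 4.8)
The plan is to check, in order, that $\Pi^{\pm}$ are orthogonal projectors of rank $(d\pm1)/2$, and that the Weyl--Heisenberg orbit $\Pi^{\pm}_{\mbf{p}} := D_{\mbf{p}}\Pi^{\pm}D_{\mbf{p}}^{\dagger}$, $p_1,p_2 = 0,\dots,d-1$, is a tight fusion frame all of whose cross-traces $\Tr(\Pi^{\pm}_{\mbf{p}}\Pi^{\pm}_{\mbf{q}})$ with $\mbf{p}\neq\mbf{q}$ are equal. First I would note that $A$ is Hermitian: since $D_{\mbf{p}}^{\dagger}=D_{-\mbf{p}}$ and $e^{i\phi_{-\mbf{p}}}=e^{-i\phi_{\mbf{p}}}$, the substitution $\mbf{p}\to-\mbf{p}$ shows $A^{\dagger}=A$. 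The key computation is $A^2=\boldsymbol{1}_d$. Using $D_{\mbf{u}}D_{\mbf{v}}=\tau^{\la\mbf{u},\mbf{v}\ra}D_{\mbf{u}+\mbf{v}}$, setting $\mbf{v}=\mbf{p}-\mbf{u}$, and invoking the antisymmetry identity $\la\mbf{u},\mbf{p}-\mbf{u}\ra=\la\mbf{u},\mbf{p}\ra$, one gets
\[
A^2=\frac{1}{d^2}\sum_{\mbf{p}}D_{\mbf{p}}\sum_{\mbf{u}}\tau^{\la\mbf{u},\mbf{p}\ra}e^{i(\phi_{\mbf{u}}+\phi_{\mbf{p}-\mbf{u}})},
\]
and the inner sum collapses to $d^2\delta^{(d)}_{\mbf{p},\boldsymbol{0}}$ by hypothesis~\eqref{eq:Acond}, so $A^2=D_{\boldsymbol{0}}=\boldsymbol{1}_d$. (Oddness of $d$ enters here implicitly: it guarantees $\tau^d=1$, so the $D_{\mbf{p}}$ depend only on $\mbf{p}$ mod $d$ and the sums over $\mbb{Z}_d^2$ are well defined.) A Hermitian $A$ with $A^2=\boldsymbol{1}_d$ has spectrum in $\{\pm1\}$, so $\Pi^{\pm}=\tfrac12(\boldsymbol{1}_d\pm A)$ are the orthogonal projectors onto its eigenspaces; and since $\Tr(D_{\mbf{p}})=d\,\delta_{\mbf{p},\boldsymbol{0}}$ and $e^{i\phi_{\boldsymbol{0}}}=1$ give $\Tr(A)=1$, we get $\rnk\Pi^{\pm}=\Tr\Pi^{\pm}=(d\pm1)/2$.

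Next I would establish the tight-fusion-frame identity. Writing $\sum_{\mbf{p}}D_{\mbf{p}}\Pi^{\pm}D_{\mbf{p}}^{\dagger}=\tfrac12\big(d^2\boldsymbol{1}_d\pm\sum_{\mbf{p}}D_{\mbf{p}}AD_{\mbf{p}}^{\dagger}\big)$, a short manipulation with~\eqref{eq:whdopprops} gives the conjugation rule $D_{\mbf{p}}D_{\mbf{q}}D_{\mbf{p}}^{\dagger}=\tau^{2\la\mbf{p},\mbf{q}\ra}D_{\mbf{q}}$, hence $\sum_{\mbf{p}}D_{\mbf{p}}AD_{\mbf{p}}^{\dagger}=\tfrac1d\sum_{\mbf{q}}e^{i\phi_{\mbf{q}}}D_{\mbf{q}}\sum_{\mbf{p}}\tau^{2\la\mbf{p},\mbf{q}\ra}$. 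Because $d$ is odd, $\mbf{p}\mapsto2\mbf{p}$ is a bijection of $\mbb{Z}_d^2$, so $\sum_{\mbf{p}}\tau^{2\la\mbf{p},\mbf{q}\ra}=\sum_{\mbf{p}}\tau^{\la\mbf{p},\mbf{q}\ra}=d^2\delta_{\mbf{q},\boldsymbol{0}}$; thus $\sum_{\mbf{p}}D_{\mbf{p}}AD_{\mbf{p}}^{\dagger}=d\,\boldsymbol{1}_d$ and $\sum_{\mbf{p}}\Pi^{\pm}_{\mbf{p}}=\tfrac{d(d\pm1)}{2}\boldsymbol{1}_d$, a multiple of the identity (indeed with the coefficient $nr/d$ expected from the generalization of~\eqref{tf}).

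For the symmetry property I would first use covariance to reduce the cross-trace to a function of $\mbf{p}-\mbf{q}$: by cyclicity $\Tr(\Pi^{\pm}_{\mbf{p}}\Pi^{\pm}_{\mbf{q}})=\Tr\big((D_{\mbf{q}}^{\dagger}D_{\mbf{p}})\Pi^{\pm}(D_{\mbf{p}}^{\dagger}D_{\mbf{q}})\Pi^{\pm}\big)$, and since $D_{\mbf{q}}^{\dagger}D_{\mbf{p}}=\tau^{\la-\mbf{q},\mbf{p}\ra}D_{\mbf{p}-\mbf{q}}$ with $D_{\mbf{p}}^{\dagger}D_{\mbf{q}}=(D_{\mbf{q}}^{\dagger}D_{\mbf{p}})^{\dagger}$, the two phases cancel and $\Tr(\Pi^{\pm}_{\mbf{p}}\Pi^{\pm}_{\mbf{q}})=\Tr(\Pi^{\pm}_{\mbf{p}-\mbf{q}}\Pi^{\pm})$. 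Then expanding $\Tr(\Pi^{\pm}_{\mbf{r}}\Pi^{\pm})=\tfrac14\Tr\big((\boldsymbol{1}_d\pm D_{\mbf{r}}AD_{\mbf{r}}^{\dagger})(\boldsymbol{1}_d\pm A)\big)$ and using $\Tr(A)=\Tr(D_{\mbf{r}}AD_{\mbf{r}}^{\dagger})=1$ together with
\[
\Tr(D_{\mbf{r}}AD_{\mbf{r}}^{\dagger}A)=\frac1d\sum_{\mbf{q}}e^{i\phi_{\mbf{q}}}e^{i\phi_{-\mbf{q}}}\tau^{2\la\mbf{r},\mbf{q}\ra}=\frac1d\sum_{\mbf{q}}\tau^{2\la\mbf{r},\mbf{q}\ra}=d\,\delta^{(d)}_{\mbf{r},\boldsymbol{0}}
\]
(the phases collapsing because $e^{i\phi_{\mbf{q}}}e^{i\phi_{-\mbf{q}}}=1$), one obtains $\Tr(\Pi^{\pm}_{\mbf{r}}\Pi^{\pm})=(d\pm2)/4$ for every $\mbf{r}\neq\boldsymbol{0}$, a constant. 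Together with the rank and tight-fusion statements this shows the orbit is an STFF of rank $(d\pm1)/2$, completing the proof.

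The step I expect to be the main obstacle is the careful bookkeeping of the character sums over $\mbb{Z}_d^2$, particularly the "doubled" symplectic sums $\sum_{\mbf{p}}\tau^{2\la\mbf{p},\mbf{q}\ra}$: this is exactly where oddness of $d$ is indispensable, since for even $d$ the map $\mbf{p}\mapsto2\mbf{p}$ is no longer a bijection and both the projector identity $A^2=\boldsymbol{1}_d$ (via the need to match~\eqref{eq:Acond}) and the tight-frame identity would fail. Everything else is a routine, if somewhat lengthy, application of the Weyl--Heisenberg relations~\eqref{eq:whdopprops} and the trace identity $\Tr(D_{\mbf{p}})=d\,\delta_{\mbf{p},\boldsymbol{0}}$.
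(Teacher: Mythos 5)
Your proposal is correct and follows essentially the same route as the paper: establish that $A$ is Hermitian with $\Tr(A)=1$ and $A^2=\boldsymbol{1}_d$ via the hypothesis \eqref{eq:Acond}, compute $D_{\mbf{p}}AD_{\mbf{p}}^{\dagger}=\tfrac1d\sum_{\mbf{u}}e^{i\phi_{\mbf{u}}}\tau^{2\la\mbf{p},\mbf{u}\ra}D_{\mbf{u}}$ to get the tight-fusion-frame identity from $\sum_{\mbf{p}}\tau^{2\la\mbf{p},\mbf{q}\ra}=d^2\delta_{\mbf{q},\boldsymbol{0}}$, and verify $\Tr(A_{\mbf{p}}A_{\mbf{q}})=d\,\delta^{(d)}_{\mbf{p},\mbf{q}}$ for the symmetry, arriving at the same value $\tfrac14(d\pm2+d\,\delta^{(d)}_{\mbf{p},\mbf{q}})$ for the cross-traces. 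Your explicit remarks on where oddness of $d$ enters (well-definedness of $D_{\mbf{p}}$ mod $d$ and invertibility of doubling) make precise what the paper leaves implicit.
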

\begin{remark}
The requirement that $d$ is odd is essential, since otherwise $A$, and consequently $\Pi^{\pm}$, would not be Hermitian.
\end{remark}
\begin{proof}
$A$ is an Hermitian operator such that $\Tr(A) = 1$ and 
\ea{
A^2 
&= \frac{1}{d^2} \sum_{\mbf{u},\mbf{p}} \tau^{\la \mbf{u},\mbf{p}\ra}e^{i(\phi_{\mbf{u}} + \phi_{\mbf{p}-\mbf{u}})}  D_{\mbf{p}} = \boldsymbol{1}_d.
}
It follows that the operators $\Pi^{\pm}$ defined by Eq.~\eqref{eq:pipmdf} are rank $(d\pm 1)/2$ projectors.  We also have
\ea{
A^{\vphantom{\dagger}}_{\mbf{p}} &= D^{\vphantom{\dagger}}_{\mbf{p}} A D^{\dagger}_{\mbf{p}} 
=\frac{1}{d} \sum_{\mbf{u}} e^{i\phi_{\mbf{u}}}\tau^{2\la \mbf{p},\mbf{u}\ra} D^{\vphantom{\dagger}}_{\mbf{u}}.
}
It follows that
\ea{
\sum_{\mbf{p}} A_{\mbf{p}} &= d\boldsymbol{1}_d,
}
and, consequently,
\ea{
\sum_{\mbf{p}} \Pi^{\pm}_{\mbf{p}} &= \frac{1}{2}d(d\pm1 ) \boldsymbol{1}_d.
}
So the sets $\{\Pi^{\pm}_{\mbf{p}}\colon p_1,p_2 = 0, \dots, d-1\}$ are tight fusion frames.   Finally
\ea{
\Tr(A_{\mbf{p}}) &= 1, & \Tr\left( A_{\mbf{p}} A_{\mbf{q}}\right) &
=d \delta^{(d)}_{\mbf{p},\mbf{q}},
}
implying
\ea{
\Tr\left(\Pi^{\pm}_{\mbf{p}} \Pi^{\pm}_{\mbf{q}}\right) &= \frac{1}{4} \left(d \pm 2 + d\delta^{(d)}_{\mbf{p},\mbf{q}}\right).
}
So the frame is symmetric.
\end{proof}
\begin{theorem}
\label{cor:condes}
Let $\Pi$ be a WH SIC fiducial projector in  odd dimension $d$ and let $e^{i\theta_{\mbf{p}}}$ be as  in Eq.~\eqref{eq:phsedef}.  Let $F$ be any matrix in $\GL(2,\mbb{Z}/d\mbb{Z})$ such that $\det F = 2^{-1}$.  Then 
\ea{
\Pi^{\pm} &= \frac{d\pm 1}{2d}\boldsymbol{1}_d \pm \frac{1}{2d}\sum_{\mbf{p}\neq 0} e^{2i\theta_{F\mbf{p}}} D_{\mbf{p}}
}
are the fiducial projectors for  WH covariant STFFs of rank $(d\pm 1)/2$.
\end{theorem}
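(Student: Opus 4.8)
The plan is to derive the result from Lemma~\ref{lm:stffcond}, applied to the phases $e^{i\phi_{\mbf{p}}}:=e^{2i\theta_{F\mbf{p}}}$. With this choice the operator $A=\frac1d\sum_{\mbf{p}}e^{2i\theta_{F\mbf{p}}}D_{\mbf{p}}$ of the lemma has $\mbf{p}=\boldsymbol{0}$ term equal to $\frac1d\boldsymbol{1}_d$, so $\frac12(\boldsymbol{1}_d\pm A)$ of Eq.~\eqref{eq:pipmdf} is exactly the stated $\Pi^{\pm}$ (the $\boldsymbol{0}$ term shifts the identity coefficient from $\frac12$ to $\frac{d\pm1}{2d}$). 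It therefore suffices to check the hypotheses of the lemma. Since $d$ is odd, $\tau^d=1$, so $D_{\mbf{p}}$, hence $\Tr(D^{\dagger}_{\mbf{p}}\Pi)$, hence $e^{i\theta_{\mbf{p}}}$, depends only on $\mbf{p}\bmod d$; thus $e^{2i\theta_{F\mbf{p}}}$ is well defined (with $F\mbf{p}$ read in $\mbb{Z}/d\mbb{Z}$) and the periodicity hypothesis holds automatically. The requirement $e^{i\phi_{\boldsymbol{0}}}=1$ is immediate from $\theta_{\boldsymbol{0}}=0$, and $e^{i\phi_{-\mbf{p}}}=e^{-i\phi_{\mbf{p}}}$ follows from $e^{i\theta_{-\mbf{p}}}=\overline{e^{i\theta_{\mbf{p}}}}=e^{-i\theta_{\mbf{p}}}$, a consequence of $\Pi^{\dagger}=\Pi$. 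Everything then reduces to verifying Eq.~\eqref{eq:Acond}.

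To verify Eq.~\eqref{eq:Acond} I would first remove $F$ by the substitution $\mbf{u}=F^{-1}\mbf{v}$, writing $\mbf{q}=F\mbf{p}$. Since $\la M\mbf{a},M\mbf{b}\ra=(\det M)\la\mbf{a},\mbf{b}\ra$ for any $2\times2$ matrix $M$, we get $\la F^{-1}\mbf{v},F^{-1}\mbf{q}\ra=(\det F)^{-1}\la\mbf{v},\mbf{q}\ra=2\la\mbf{v},\mbf{q}\ra$ in $\mbb{Z}/d\mbb{Z}$; combined with $\tau^d=1$ this turns Eq.~\eqref{eq:Acond} into the equivalent statement that, for all $\mbf{q}$,
\ea{
\sum_{\mbf{v}}\tau^{2\la\mbf{v},\mbf{q}\ra}\,e^{2i(\theta_{\mbf{v}}+\theta_{\mbf{q}-\mbf{v}})}&=d^2\,\delta^{(d)}_{\mbf{q},\boldsymbol{0}}.
\label{eq:plantgt}
}
This is a statement about the squared overlap phases, and it is here that Theorem~\ref{tm:main} is used. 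Put $g(\mbf{r})=\la\psi|D_{\mbf{r}}|\psi\ra$ and $h(\mbf{r})=g(\mbf{r})^2$; then $g(\boldsymbol{0})=1$, $\overline{h(\mbf{r})}=h(-\mbf{r})$, and by Eq.~\eqref{eq:phsedef} one has $e^{2i\theta_{\mbf{v}}}=(d+1)h(-\mbf{v})-d\,\delta^{(d)}_{\mbf{v},\boldsymbol{0}}$ for every $\mbf{v}$, the $-d\,\delta^{(d)}$ recording the mismatch between the $D_{\boldsymbol{0}}$ coefficient $1/d$ and the coefficients $1/(d\sqrt{d+1})$ of the remaining $D_{\mbf{p}}$ in $\Pi$. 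In the Weyl--Heisenberg ordering $G_{\mbf{p},\mbf{q}}=\la\psi|D^{\dagger}_{\mbf{p}}D_{\mbf{q}}|\psi\ra=\tau^{-\la\mbf{p},\mbf{q}\ra}g(\mbf{q}-\mbf{p})$, hence $G^{(2)}_{\mbf{p},\mbf{q}}=\tau^{-2\la\mbf{p},\mbf{q}\ra}h(\mbf{q}-\mbf{p})$, and Theorem~\ref{tm:main} gives $(G^{(2)})^2=\frac{2d}{d+1}G^{(2)}$. Writing out its $(\mbf{p},\mbf{r})$ entry and setting $\mbf{s}=\mbf{q}-\mbf{p}$, $\mbf{t}=\mbf{r}-\mbf{p}$ (so that $\tau^{-2\la\mbf{p},\mbf{q}\ra-2\la\mbf{q},\mbf{r}\ra}=\tau^{-2\la\mbf{p},\mbf{t}\ra}\tau^{-2\la\mbf{s},\mbf{t}\ra}$) yields, for all $\mbf{t}$,
\ea{
\sum_{\mbf{s}}\tau^{-2\la\mbf{s},\mbf{t}\ra}\,h(\mbf{s})\,h(\mbf{t}-\mbf{s})&=\frac{2d}{d+1}\,h(\mbf{t}),
\label{eq:planprj}
}
and conjugating Eq.~\eqref{eq:planprj} (using $\overline{h(\mbf{r})}=h(-\mbf{r})$ and $\overline{\tau}=\tau^{-1}$) and relabelling gives $\sum_{\mbf{s}}\tau^{-2\la\mbf{s},\mbf{q}\ra}h(\mbf{s})h(-\mbf{q}-\mbf{s})=\frac{2d}{d+1}h(-\mbf{q})$.

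I would then substitute $e^{2i\theta_{\mbf{v}}}=(d+1)h(-\mbf{v})-d\,\delta^{(d)}_{\mbf{v},\boldsymbol{0}}$ and the analogous formula for $e^{2i\theta_{\mbf{q}-\mbf{v}}}$ into the left-hand side of Eq.~\eqref{eq:plantgt} and expand into four sums. After the change of variable $\mbf{v}\mapsto-\mbf{v}$ the $(d+1)^2$ term equals $2d(d+1)h(-\mbf{q})$ by the conjugated form of Eq.~\eqref{eq:planprj}; each of the two cross terms collapses through its Kronecker delta to $-d(d+1)h(-\mbf{q})$; and the remaining term is $d^2\delta^{(d)}_{\mbf{q},\boldsymbol{0}}$. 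The first three cancel, leaving precisely $d^2\delta^{(d)}_{\mbf{q},\boldsymbol{0}}$, which is Eq.~\eqref{eq:plantgt}. Hence Eq.~\eqref{eq:Acond} holds and Lemma~\ref{lm:stffcond} gives the theorem. The main obstacle is this final bookkeeping: one must track the $-d\,\delta^{(d)}$ correction with care, since it is exactly this term that makes the projector identity of Theorem~\ref{tm:main} line up with the unitarity condition $A^2=\boldsymbol{1}_d$ built into Lemma~\ref{lm:stffcond}; the other point needing care is the $\GL(2,\mbb{Z}/d\mbb{Z})$ change of variables, in particular the factor $2=(\det F)^{-1}$ that reconciles the single power of $\tau$ appearing in Eq.~\eqref{eq:Acond} with the squared phases.
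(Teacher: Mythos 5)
Your proposal is correct and follows essentially the same route as the paper: reduce to Lemma~\ref{lm:stffcond} with $e^{i\phi_{\mbf{p}}}=e^{2i\theta_{F\mbf{p}}}$, remove $F$ by a change of variables using $\la F^{-1}\mbf{v},F^{-1}\mbf{q}\ra=(\det F)^{-1}\la\mbf{v},\mbf{q}\ra=2\la\mbf{v},\mbf{q}\ra$, and then invoke Theorem~\ref{tm:main}. The only cosmetic difference is that your final bookkeeping with $h=g^2$ and the $-d\,\delta^{(d)}$ corrections re-derives in components what the paper gets in one line by writing the sum as $d^2\sum_{\mbf{u}}H_{F\mbf{p},\mbf{u}}H_{\mbf{u},\boldsymbol{0}}$ and citing $H^2=\boldsymbol{1}_{d^2}$ from Corollary~\ref{cor:had}.
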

\begin{remark}
Here $\GL(2,\mbb{Z}/d\mbb{Z})$ is the group of invertible $2\times 2$ matrices whose elements are integers mod $d$.  The symbol $2^{-1}$ denotes the multiplicative inverse of $2$ mod $d$.  
\end{remark}
\begin{proof}
We can write
\ea{
\Pi^{\pm} &= \frac{1}{2} \left( \boldsymbol{1}_d \pm A\right).
}
where
\ea{
A &= \frac{1}{d} \sum_{\mbf{p}} e^{2i\theta_{F\mbf{p}}} D_{\mbf{p}}.
}
Now let $H$ be the complex Hadamard matrix defined in Corollary~\ref{cor:had}.  Then
\ea{
H_{\mbf{p},\mbf{q}} &=\frac{1}{d} \tau^{-2\la \mbf{p},\mbf{q}\ra} e^{2i\theta_{\mbf{p}-\mbf{q}}}.
}
We  use this,  the fact that $\la K \mbf{p}, K\mbf{q}\ra = (\det K )\la \mbf{p}, \mbf{q}\ra$ for all $\mbf{p}$, $\mbf{q}\in \mbb{Z}^2$ and $K\in \GL(2,\mbb{Z}/d\mbb{Z})$, and the fact  that $H^2 = \boldsymbol{1}_d$, to deduce
\ea{
\sum_{\mbf{u}} \tau^{\la \mbf{u},\mbf{p}\ra} e^{2i(\theta_{F\mbf{u}}+\theta_{F(\mbf{p}-\mbf{u})})} 
&=\sum_{\mbf{u}} \tau^{2\la \mbf{u},F\mbf{p}\ra} e^{2i(\theta_{\mbf{u}}+\theta_{F\mbf{p}-\mbf{u}})} 
\nn
&= d^2\sum_{\mbf{u}} H_{F\mbf{p},\mbf{u}}H_{\mbf{u},\boldsymbol{0}}
\nn
&= d^2 \delta^{(d)}_{\mbf{p},\boldsymbol{0}}.
\label{eq:thm7lsteq}
}
The claim now follows by an application of  Lemma~\ref{lm:stffcond}.
\end{proof}
The fact that the sets constructed  in this theorem are STFFs with the maximal number of elements means they  are conical designs~\cite{Graydon}.  When they are rescaled by the factor $2/(d(d\pm 1))$ so as to make them  POVMs they become SIMs~\cite{Graydon}.  

There are some connections between the STFFs constructed here and the Wigner POVM constructed in ref.~\cite{Marcus2}.  Continue to assume $d$ is odd and let $P$ be the parity operator which acts on the basis featuring in Eq.~\eqref{eq:dopprops} by
\ea{
P|j\ra &= |-j\ra
}
Then
\ea{
P &= \frac{1}{d} \sum_{\mbf{p}} D_{\mbf{p}}.
}
In view of  Lemma~\ref{lm:stffcond} this means the operators
\ea{
\Pi^{\pm}_W &= \frac{1}{2}(\boldsymbol{1}_d \pm P),
}
like the operators $\Pi^{\pm}$ constructed in Theorem~\ref{cor:condes}, are the fiducial projectors for a pair of WH covariant rank $(d\pm 1)/2$ STFFs.  Rescaling them by $2/(d(d\pm 1))$ they define the POVMs,
\ea{
E^{\pm}_{\mbf{p}} &= \frac{1}{d(d\pm1)} \left( \boldsymbol{1}_d  \pm P_{\mbf{p}}\right)
}
where 
\ea{
P^{\vphantom{\dagger}}_{\mbf{p}} &= D^{\vphantom{\dagger}}_{\mbf{p}} P D^{\dagger}_{\mbf{p}}.
}
The associated probability distributions in the state $\rho$ are obtained from the Wigner function $W_{\mbf{p}}$ by shifting and re-scaling:
\ea{
\Tr(\rho E^{\pm}_{\mbf{p}}) &= \frac{1}{d\pm1} \left( \frac{1}{d} \pm W_{\mbf{p}}\right).
}
We accordingly refer to $\Pi^{\pm}_{\mbf{p}}$ as the Wigner STFFs.

We see from this that Eq.~\eqref{eq:Acond} has at least two  solutions: namely, the trivial solution $e^{i\phi_{\mbf{p}}}=1$, giving rise to the Wigner STFFs, and, as it is shown in the proof of Theorem~\ref{cor:condes}, the solution $e^{i\phi_{\mbf{p}}}=e^{2 i \theta_{F\mbf{p}}}$, giving rise to the SIC-related STFFs there described.  It will be shown in Eq.~\eqref{eq:NaimarkCompEqns})  below that these equations have non-SIC solutions .  Nevertheless, they are  potentially  interesting.  

Eq.~\eqref{eq:Acond} for $\mbf{p}=\boldsymbol{0}$ is automatic for any choice of the $e^{i\phi_{\mbf{p}}}$ satisfying $e^{i\phi_{-\mbf{p}}}=e^{-i\phi_{\mbf{p}}}$.  Written in terms of the SIC overlap phases $e^{i\theta_{\mbf{p}}}$ the other conditions read (by the first line of Eq.~\eqref{eq:thm7lsteq})
\ea{
\sum_{\mbf{u}}\tau^{2\la \mbf{u},\mbf{p}\ra} e^{2i(\theta_{\mbf{u}} + \theta_{\mbf{p}-\mbf{u}})} &= 0, & \forall \mbf{p} &\neq \boldsymbol{0} \mod d.
\label{eq:SICaltB}
}
It is natural to ask how many distinct solutions these equations have. At one extreme it might turn out that SICs are situated on a continuous manifold of solutions. At the other extreme it might be that the trivial solution and the SIC-related solutions are the only solutions.   In the former case a full description of the manifold could be a source of major insight.  One standard strategy for attacking a difficult problem is to relax the conditions, so as to embed the object of interest in a larger class of objects. Eqs.~\eqref{eq:SICaltB} have a very simple structure.  In spite of much effort no one has succeeded in solving the standard defining equations of a SIC.  Perhaps these equations will prove to be more tractable.

\section{Naimark Complement of a SIC} 
\label{sec:naimark}
In this section we consider the Naimark complement of a WH SIC (as opposed to one of the higher dimensional ETFs constructed from it in Section~\ref{sec:associated1}).  The construction differs from those presented in other sections in that it does not involve the squared overlap phases.  Its interest comes from the fact that it leads to another relaxation of the WH SIC existence problem.

It is easily seen that the Naimark complement of a WH SIC is covariant with respect to the direct sum of $d-1$ copies of the WH group.  Indeed, let $|\psi\ra$ be a WH SIC fiducial vector.  Let $|\phi_1\ra, \dots , |\phi_{d}\ra$ be an orthonormal basis for $d$ dimensional Hilbert space such that $|\phi_1\ra = |\psi\ra$ and define
\ea{
|\tilde{\psi}\ra &=\frac{1}{\sqrt{d-1}} \bmt |\phi_2\ra \\ |\phi_3\ra \\ \vdots \\ |\phi_{d} \ra \emt , & \tilde{D}_{\mbf{p}} &= \bmt D_{\mbf{p}} & 0  & \dots &0 \\0 & D_{\mbf{p}} & \dots  & 0  \\ \vdots &\vdots & \ddots & \vdots \\ 0&0&\dots& D_{\mbf{p}}\emt.
\label{eq:naimSICFid}
}
Using the fact that $\Tr(D_{\mbf{p}}) = d\delta_{\mbf{p},\boldsymbol{0}}$ it is  straightforward to verify that $\{\tilde{D}_{\mbf{p}}|\tilde{\psi}\ra \colon p_1, p_2 = 0 , \dots, d-1\}$ is a Naimark complement.  Let us note that this construction is  similar to one described in ref.~\cite{BodKing}.

It is a so-far unexplained fact~\cite{Zauner,Marcus,Scott,Andrew,FHS} that  every known WH SIC fiducial is an eigenvector of an order $3$ unitary $U$ which has the property
\ea{
UD_{\mbf{p}}U^{\dagger} = \tau^{2\la \boldsymbol{\xi}, \mbf{p}\ra} D_{F\mbf{p}}
}
for some ``vector'' $\boldsymbol{\xi}=\left(\begin{smallmatrix} \xi_1 \\ \xi_2 \end{smallmatrix}\right)$ and matrix $F=\left(\begin{smallmatrix} \alpha & \beta \\ \gamma & \delta\end{smallmatrix}\right)$ such that $\Tr(F) =-1$ mod $d$.  These two properties mean that $U$ permutes and rephases the elements of the SIC.
  It is easily seen that the Naimark complement has a similar order $3$ symmetry.   Indeed, we may choose the vectors $|\phi_j\ra$ so that they are an eigenbasis for $U$.  Let $\eta_j$ be the corresponding eigenvalues, and define
\ea{
\tilde{U} &= \bmt \eta^{*}_2 U & 0  & \dots & 0\\ 0 &  \eta^{*}_3 U & \dots & 0  \\ \vdots &\vdots & \ddots & \vdots\\ 0&0&\dots &  \eta^{*}_d U\emt.
}
 Then $\tilde{U} |\tilde{\psi}\ra = |\tilde{\psi}\ra$ and
\ea{
\tilde{U}\tilde{D}_{\mbf{p}}\tilde{U}^{\dagger} = \tau^{2\la \boldsymbol{\xi}, \mbf{p}\ra} \tilde{D}_{F\mbf{p}}.
}

If we allow the vector $|\tilde{\psi}\ra$ to range over the whole of $\mbb{C}^{d(d-1)}$ then the system of equations
\ea{
\bigl| \la \tilde{\psi} |\tilde{D}_{\mbf{p}} |\tilde{\psi}\ra \bigr|^2 &= \begin{cases} \frac{1}{(d-1)(d^2-1)} \qquad & \mbf{p}\ne \boldsymbol{0} \mod d \\ 1 \qquad &\mbf{p} = \boldsymbol{0} \mod d \end{cases}
\label{eq:NaimarkCompEqns}
}
gives us another relaxation of the WH SIC existence problem.  The case when $|\tilde{\psi}\ra$ is constrained to be the direct sum of $d-1$ orthonormal vectors (so that every solution is the fiducial vector of the complement of a WH SIC) is also interesting since it give us a system of equations which, while defining the same object, has a different algebraic structure from the usual defining equations for a SIC.  For instance, although the number of equations is $d^2$, as in the usual formulation, the number of real variables is  larger.  Indeed, in the usual formulation we need $2d-1$ variables to fix a single unit vector.  Since the phase cancels out of the equations the effective number of free parameters is $2d-2$.  By contrast in this formulation we need $2d-1$ real variables 
to fix  $|\phi_2\ra$ in Eq.~\eqref{eq:naimSICFid}, $2d-3$ variables to fix $|\phi_3\ra$ (two fewer because of the constraint that $|\phi_3\ra$ is orthogonal to $|\phi_2\ra$, \dots .  Taking account of the fact that the overall phase again cancels out of the problem this gives us $d^2-2$ variables in total.  We thus have
\ea{
 \lim_{d\to \infty}\left(\frac{\text{number of equations}}{\text{number of real variables}}\right) &=
 \begin{cases}
 \infty \qquad &\text{in the usual formulation}
 \\
 1 \qquad &\text{in this formulation}
 \end{cases}
}
(For the seeming 
overdetermination in the usual  formulation see also ref. \cite{FHS}.)

\section{Continuous Families}
\label{sec:contfam}
%
We now return to the constructions in Sections~\ref{sec:associated1} and~~\ref{sec:assoc2}, involving the squares of the overlap phases.
In Section~\ref{sec:associated1} we showed that a SIC in dimension $d$ is  associated to
\begin{enumerate}
\item a rank $d(d+1)/2$ projector $Q$ in dimension $d^2$,
\item a complex Hermitian Hadamard matrix $H$ in dimension $d^2$,
\item two ETFs $E$ and $\tilde{E}$ in dimensions $d(d+1)/2$ and $d(d-1)/2$.
\end{enumerate}
In Section~\ref{sec:assoc2} we showed that a WH SIC in odd dimension $d$ is also associated to
\begin{enumerate}
\setcounter{enumi}{3}
\item two WH covariant STFFs  $\Pi^{\pm}$ of rank $(d\pm 1)/2$.
\end{enumerate}
For $d\ge 4$ every known SIC is isolated. This is not always (possibly never) true of the structures $Q$, $H$, $E$, $\tilde{E}$, $\Pi^{\pm}$ associated to them.  We will show, by explicit construction, that the structures $Q$, $H$, $E$, and $\tilde{E}$ associated to   SICs on orbits\footnote{We are here using the classification scheme of ref.~\cite{Scott}.  Thus $4a$ and $6a$ are the unique extended Clifford orbits of SICs in dimensions 4 and 6 while $8b$ is one of the two such orbits in dimension 8.} $4a$, $6a$ and $8b$ in dimensions 4, 6 and 8 are points on continuous, one-parameter curves $Q(t)$, $H(t)$, $E(t)$, and $\tilde{E}(t)$ of projectors, complex Hadamard matrices, and ETFs respectively.  In  Appendix~\ref{Ap1} we present  calculations of the restricted defect  which encourage the conjecture, that for $d\ge 3$ these structures (along with the STFFs for WH SICs in odd dimension) are in fact always embedded in  manifolds of dimension 1 or higher.

In this section we are exclusively concerned with structures associated to a WH SIC.  For such a SIC, with normalized fiducial vector $|\psi\ra$, we can write
\ea{
Q_{\mbf{p},\mbf{q}} &= \frac{1}{2d} \tau^{-2\la \mbf{p},\mbf{q}\ra} M_{\mbf{p}-\mbf{q}} + \frac{1}{2}\delta_{\mbf{p},\mbf{q}},
\label{eq:QtermsM}
\\
H_{\mbf{p},\mbf{q}} &= \frac{1}{d} \tau^{-2\la \mbf{p},\mbf{q}\ra} M_{\mbf{p}-\mbf{q}},
\label{eq:HtermsM}
\\
E_{\mbf{p}, \mbf{q}} &= \frac{1}{d+1} \tau^{-2\la \mbf{p},\mbf{q}\ra} M_{\mbf{p}-\mbf{q}} + \frac{d}{d+1}\delta_{\mbf{p},\mbf{q}},
\label{eq:EtermsM}
\\
\tilde{E}_{\mbf{p}, \mbf{q}} &=- \frac{1}{d-1} \tau^{-2\la \mbf{p},\mbf{q}\ra} M_{\mbf{p}-\mbf{q}} + \frac{d}{d-1}\delta_{\mbf{p},\mbf{q}},
\label{eq:EttermsM}
}
where
\ea{
M_{\mbf{p}} &= e^{2i\theta_{\mbf{p}}}=
\begin{cases}
1 \qquad & \mbf{p} = \boldsymbol{0} \quad \text{(mod $d$)},
\\
(d+1)\left( \la \psi |D^{\dagger}_{\mbf{p}} |\psi\ra\right)^2 \qquad & \text{otherwise}.
\end{cases}
}
If $d$ is odd we also have the STFFs
\ea{
\Pi^{\pm} &= \frac{1}{2}\boldsymbol{1}_d \pm \frac{1}{2d} \sum_{\mbf{p}} M_{F\mbf{p}} D_{\mbf{p}}.
\label{eq:PiTermsM}
}
where $F$ is as defined in the statement of Theorem~\ref{cor:condes}.
We define a $d\times d$ matrix by
\ea{
M &= \bmt M_{0,0} & M_{0,1} & \dots & M_{0,d-1} \\ M_{1,0} &M_{1,1} & \dots & M_{1,d-1} \\ \vdots & \vdots & & \vdots \\ M_{d-1,0} & M_{d-1,1} & \dots & M_{d-1,d-1} \emt
}
where we use the shorthand $M_{p_1,p_2} = M_{(p_1,p_2)^{\rm{T}}}$.  We refer to $M$ as the squared-phase matrix.
The fact that $M$ is defined in terms of the squared matrix elements means we can work mod $d$ irrespective of whether $d$ is even or odd.  In particular
\ea{
M^{*}_{p_1,p_2} &= M\vpu{*}_{d-p_1,d-p_2}.
}
for all $\mbf{p}$, $d$.  

%

We now  lift the restriction to matrices $M$ defined in terms of a WH SIC.
\begin{lemma}
Let $M_{\mbf{p}} $ be any set of numbers such that
\ea{
M_{\boldsymbol{0}} = 1,
\label{eq:M0}
}
and
\ea{
|M_{\mbf{p}} | =1, \qquad M_{-\mbf{p}} =M^{*}_{\mbf{p}}
\label{eq:Mabs}
}
for all $\mbf{p}$.  
Let $Q$, $H$, $E$, $\tilde{E}$ be the $d^2\times d^2$ matrices defined in terms of $M$ by Eqs.~\eqref{eq:QtermsM}--\eqref{eq:EttermsM}.  If $d$ is odd also let $\Pi^{\pm}$ be the matrices defined in terms of $M$ by Eq.~\eqref{eq:PiTermsM}.  Then the following statements are equivalent
\begin{enumerate}
\item \label{it:Mprop} $M$ satisfies
\ea{
\sum_{\mbf{r}} \tau^{-2\la \mbf{p}, \mbf{r}\ra } M_{\mbf{p}-\mbf{r}} M_{\mbf{r}} &= d^2 \delta_{\mbf{p},\boldsymbol{0}}
\label{eq:MPropeq}
}
for all $\mbf{p}$,
\item \label{it:Qprop}  $Q$ is a rank $d(d+1)/2$ projector,
\item \label{it:Hprop}  $H$ is a complex Hermitian Hadamard matrix with $\Tr(H) = d$.
\item \label{it:Eprop}  $E$, $\tilde{E}$ are the Gram matrices of  ETFs in dimension $d(d+1)/2$, $d(d-1)/2$ respectively.
\end{enumerate}
If $d$ is odd these statements are also equivalent to
\begin{enumerate}[resume]
\item \label{it:Piprop}  $\Pi^{\pm}$ are fiducial projectors for a pair of rank  $(d\pm1)/2$ WH covariant STFFs
\end{enumerate}
\end{lemma}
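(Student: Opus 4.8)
The plan is to make the complex Hadamard matrix $H$ the hub of the argument, because the objects in \eqref{it:Qprop}, \eqref{it:Hprop}, \eqref{it:Eprop} (and \eqref{it:Piprop} for odd $d$) are all tied to $H$, or to one another, by affine maps that do not involve $M$. Concretely, Eqs.~\eqref{eq:QtermsM}--\eqref{eq:EttermsM} immediately give the matrix identities $H = 2Q - \boldsymbol{1}_{d^2}$, $E = \tfrac{2d}{d+1}Q$ and $\tilde{E} = \tfrac{2d}{d-1}(\boldsymbol{1}_{d^2}-Q)$, valid for every $M$ obeying \eqref{eq:M0}--\eqref{eq:Mabs}. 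I would first note that \eqref{eq:Mabs} already forces $H$ to be Hermitian with every entry of modulus $1/d$, while \eqref{eq:M0} forces $\Tr(H)=d$; hence $H$ being a complex Hermitian Hadamard matrix with $\Tr(H)=d$ is \emph{equivalent} to $H$ being unitary, i.e. to $H^2=\boldsymbol{1}_{d^2}$.

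The core step is then \eqref{it:Mprop}~$\Leftrightarrow$~\eqref{it:Hprop}. Shifting the summation variable in $(H^2)_{\mbf{p},\mbf{q}}=\sum_{\mbf{r}}H_{\mbf{p},\mbf{r}}H_{\mbf{r},\mbf{q}}$ by $\mbf{q}$ and using $\la\mbf{q},\mbf{q}\ra=0$ gives
\ea{
(H^2)_{\mbf{p},\mbf{q}} &= \frac{1}{d^2}\,\tau^{-2\la\mbf{p},\mbf{q}\ra}\sum_{\mbf{r}}\tau^{-2\la\mbf{p}-\mbf{q},\mbf{r}\ra}M_{(\mbf{p}-\mbf{q})-\mbf{r}}\,M_{\mbf{r}},
}
and the inner sum is exactly the left-hand side of \eqref{eq:MPropeq} with $\mbf{p}$ replaced by $\mbf{p}-\mbf{q}$. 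So $H^2=\boldsymbol{1}_{d^2}$ iff \eqref{eq:MPropeq} holds for all $\mbf{p}$ — the diagonal case $\mbf{p}=\boldsymbol{0}$ of \eqref{eq:MPropeq} being automatic from $|M_{\mbf{p}}|=1$ and $M_{-\mbf{p}}=M^{*}_{\mbf{p}}$, and the ``only if'' direction following on setting $\mbf{q}=\boldsymbol{0}$. The remaining equivalences among \eqref{it:Qprop}--\eqref{it:Eprop} are then immediate: $Q=\tfrac{1}{2}(H+\boldsymbol{1}_{d^2})$ is idempotent iff $H^2=\boldsymbol{1}_{d^2}$, and such a $Q$ automatically has $\rnk(Q)=\Tr(Q)=d(d+1)/2$; and by the Gram-matrix criteria of Section~\ref{sec:Prelim}, $E=\tfrac{2d}{d+1}Q$ and its Naimark complement $\tilde E=\tfrac{2d}{d-1}(\boldsymbol{1}_{d^2}-Q)$ are Gram matrices of ETFs in dimensions $d(d+1)/2$ and $d(d-1)/2$ iff $Q$ is a rank $d(d+1)/2$ projector, since \eqref{eq:M0}--\eqref{eq:Mabs} already supply the unit diagonal and constant off-diagonal modulus.

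For odd $d$ I would obtain \eqref{it:Mprop}~$\Leftrightarrow$~\eqref{it:Piprop} from Lemma~\ref{lm:stffcond}. Writing $\Pi^{\pm}=\tfrac{1}{2}(\boldsymbol{1}_d\pm A)$ with $A=\tfrac{1}{d}\sum_{\mbf{p}}M_{F\mbf{p}}D_{\mbf{p}}$, the relevant phases are $e^{i\phi_{\mbf{p}}}=M_{F\mbf{p}}$; conditions \eqref{eq:M0}--\eqref{eq:Mabs} together with the invertibility of $F$ give $e^{i\phi_{\boldsymbol{0}}}=1$, $e^{i\phi_{-\mbf{p}}}=e^{-i\phi_{\mbf{p}}}$ and mod-$d$ periodicity, while $d$ odd makes $A$ (hence each $\Pi^{\pm}$) Hermitian. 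It remains to match Eq.~\eqref{eq:Acond} for the phases $M_{F\mbf{p}}$ with Eq.~\eqref{eq:MPropeq}: substituting $\mbf{v}=F\mbf{u}$ in the left-hand side of \eqref{eq:Acond} and using $\la F^{-1}\mbf{v},\mbf{p}\ra=(\det F)^{-1}\la\mbf{v},F\mbf{p}\ra=2\la\mbf{v},F\mbf{p}\ra$ turns it into $\sum_{\mbf{v}}\tau^{-2\la F\mbf{p},\mbf{v}\ra}M_{F\mbf{p}-\mbf{v}}M_{\mbf{v}}$, i.e. the left-hand side of \eqref{eq:MPropeq} evaluated at $\mbf{P}=F\mbf{p}$; since $\mbf{p}\mapsto F\mbf{p}$ is a bijection of $(\mbb{Z}/d\mbb{Z})^2$ fixing $\boldsymbol{0}$, Eq.~\eqref{eq:Acond} holds for all $\mbf{p}$ iff \eqref{eq:MPropeq} holds for all $\mbf{P}$. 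Lemma~\ref{lm:stffcond} then gives \eqref{it:Mprop}~$\Rightarrow$~\eqref{it:Piprop}, while conversely \eqref{it:Piprop} forces $\Pi^{\pm}$ to be projectors, hence $A^2=\boldsymbol{1}_d$, hence \eqref{eq:Acond}, hence \eqref{it:Mprop}.

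Every computation here is routine, so there is no serious obstacle; the only things to watch are the bookkeeping of the mod-$d$ reductions and the $M$-independent affine shifts in the first two paragraphs, and — in the last paragraph — getting the factor $(\det F)^{-1}=2$ into the right place so that the coefficient-$1$ exponent $\la\mbf{u},\mbf{p}\ra$ of \eqref{eq:Acond} lines up with the coefficient-$2$ exponent of \eqref{eq:MPropeq}, together with spelling out the (easy) ``only if'' directions, which merely amount to reading the defining relation of $M$ back off the operator identities $H^2=\boldsymbol{1}_{d^2}$ and $A^2=\boldsymbol{1}_d$.
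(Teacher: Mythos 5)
Your proposal is correct, and it follows essentially the route the paper intends: the paper's own ``proof'' is just the one-line remark that the lemma is an easy generalization of arguments previously given (the idempotency computation behind Corollary~\ref{cor:had}, the Gram-matrix criteria behind Corollary~\ref{cor:ETF}, and Lemma~\ref{lm:stffcond} together with the change of variables in Theorem~\ref{cor:condes}), and your write-up supplies exactly those details, with $H^2=\boldsymbol{1}_{d^2}$ correctly identified as the componentwise content of Eq.~\eqref{eq:MPropeq}. The one point worth flagging is that you rightly replace the rank/Welch-bound argument of Theorem~\ref{tm:main} (which needs an underlying vector system and so is unavailable for a general $M$) by the direct equivalence $Q^2=Q\Leftrightarrow H^2=\boldsymbol{1}_{d^2}$ with the rank read off from $\Tr(Q)$, which is the intended generalization.
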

\begin{proof}
Easy generalization of  arguments previously given.
\end{proof}

The lemma means that, in order to prove the existence of one-parameter families of projectors $Q(t)$ and complex Hadamard matrices $H(t)$ in dimension $d^2$, ETFs $E(t)$, $\tilde{E}(t)$ in dimensions $d(d\pm 1)/2$, and, when $d$ is odd, STFFs $\Pi^{\pm}(t)$ in dimension $d$, it is enough to prove the existence of a one-parameter family of $d\times d$  matrices  $M(t)$ satisfying Eqs.~\eqref{eq:M0}, \eqref{eq:Mabs}, and \eqref{eq:MPropeq}.
We now proceed to demonstrate the existence of such families for $d=4, 6, 8$. Our approach is trial-and-error:  i.e. to make an ansatz, and then to show it works.  It may be possible to prove (or disprove) existence in this way for a few more small values of $d$.  But to make progress for larger values a more systematic approach would be needed. 

\subsection{One parameter family for $d=3$}  The purpose of this section is to calculate hitherto unknown one parameter families in dimensions $4$, $6$, $8$, where every known SIC is isolated.  But for the sake of completeness let us note that in the special case of dimension $3$ a one-parameter family of SICs is already known~\cite{Zauner}, leading to a continuous family of squared-phase matrices.  The associated affine family of nine-by-nine complex Hadamard matrices intersects the non-affine family described in ref.~\cite{Karlsson}.
\subsection{One-parameter family for $d=4$}  Let $\Pi$ be the exact fiducial on orbit $4a$ described in ref.~\cite{AYAZ} (so $\Pi$ is an exact version of the Scott-Grassl~\cite{Scott} numerical fiducial on orbit $4a$, but differs from the Scott-Grassl exact fiducial).  Then the corresponding phase-matrix is
\ea{
M^{(0)} &= \bmt 1 & u^{-1} & 1 & u  \\ u^{-1} & u & u & u \\ 1 & u^{-1} & 1 & u \\ u & u^{-1} &u^{-1} & u^{-1}  \emt
}
where
\ea{
u &=-\sqrt{\frac{3-\sqrt{5}}{2}} + i \sqrt{\frac{\sqrt{5}-1}{2}}.
\label{eq:d4unit}
}
It turns  out  that this matrix continues to satisfy Eqs.~\eqref{eq:M0}, \eqref{eq:Mabs}, and \eqref{eq:MPropeq} even if the phase $u$ is left arbitrary.  So we can choose for our one paramter family
\ea{
M(t) &= \bmt 1 & e^{-it} & 1 & e^{it}  \\ e^{-it} & e^{it} & e^{it} & e^{it} \\ 1 & e^{-it}& 1 & e^{it} \\ e^{it} & e^{-it}&e^{-it}& e^{-it}  \emt, \qquad 0 \le t < 2 \pi
}

\subsection{One-parameter family for $d=8$}Let $\Pi$ be the exact fiducial on orbit $8b$ described in ref.~\cite{AYAZ} (as explained in ref.~\cite{AYAZ} this differs from the  Scott-Grassl~\cite{Scott}  numerical fiducial on orbit $8b$ as well as the exact one).  Then the corresponding squared-phase matrix is
\ea{
M^{(0)} &= \bmt 
1& u_1 u_2^{-1}& u_1^{2} & u_1^{-1} u_2 & 1 & u_1 u_2^{-1}& u_1^{-2}  & u_1^{-1} u_2
\\
u_1 u_2^{-1} & u_1^{-1} u_2 & u_2 & u_1 u_2^{-1} & u_2 & u_2^{-1} & u_1 u_2^{-1} & u_2
\\
u_1^{2} & u_2^{-1} & u_1^{-2} & u_1u_2^{-1} & u_1^{2} & u_2^{-1} & u_1^{2} & u_1 u_2^{-1}
\\
u_1^{-1} u_2 & u_1^{-1} u_2 & u_1^{-1} u_2 & u_1 u_2^{-1} & u_2^{-1} & u_2^{-1} & u_2^{-1} & u_2 
\\
1 & u_2 & u_1^{-2} & u_2^{-1} & 1 & u_2 & u_1^{2} & u_2^{-1}
\\
u_1u_2^{-1} & u_2^{-1} & u_2 & u_2 & u_2 & u_1^{-1} u_2 &  u_1 u_2^{-1}  &u_1 u_2^{-1}
\\
u_1^{-2} & u_1^{-1} u_2 & u_1^{-2} & u_2 & u_1^{-2} & u_1^{-1} u_2 & u_1^{2} & u_2
\\
u_1^{-1} u_2 & u_2^{-1} & u_1^{-1} u_2 & u_2 & u_2^{-1} & u_1^{-1} u_2 & u_2^{-1} & u_1 u_2^{-1}
\emt,
\label{eq:8bPM}
}
where
\ea{
u_1&= -\sqrt{\frac{3-\sqrt{5}}{2}} + i \sqrt{\frac{\sqrt{5}-1}{2}},
\\
u_2 &= -\frac{1}{2}\sqrt{17+8\sqrt{2}-7\sqrt{5}-4\sqrt{10}}+\frac{i}{2}\sqrt{-13-8\sqrt{2}+7\sqrt{5}+4\sqrt{10}}.
}
It will be observed that $u_1$ is the same as the phase $u$ in Eq.~\eqref{eq:d4unit}.  This is connected with the phenomenon of dimension towers~\cite{AFMY1,AFMY2,ABDF,Markus2}:  Specifically with the fact that $8b$ is the fiducial aligned with $4a$ in the sequence $4, 8, 48, \dots$. 

In view of the result for $d=4$ the obvious ansatz  is to retain the form of the matrix in Eq.~\eqref{eq:8bPM}, while allowing $u_1$ and $u_2$ to vary independently.  However, this does not work.  Instead we have to make a further relaxation, and replace $u_1^2$ with a third independent phase factor $u_3$ to obtain the ansatz
\ea{
M(u_1,u_2,u_3) &= \bmt 
1& u_1 u_2^{-1}& u_3 & u_1^{-1} u_2 & 1 & u_1 u_2^{-1}& u_3^{-1}  & u_1^{-1} u_2
\\
u_1 u_2^{-1} & u_1^{-1} u_2 & u_2 & u_1 u_2^{-1} & u_2 & u_2^{-1} & u_1 u_2^{-1} & u_2
\\
u_3 & u_2^{-1} & u_3^{-1} & u_1u_2^{-1} & u_3 & u_2^{-1} & u_3 & u_1 u_2^{-1}
\\
u_1^{-1} u_2 & u_1^{-1} u_2 & u_1^{-1} u_2 & u_1 u_2^{-1} & u_2^{-1} & u_2^{-1} & u_2^{-1} & u_2 
\\
1 & u_2 & u_3^{-1} & u_2^{-1} & 1 & u_2 & u_3& u_2^{-1}
\\
u_1u_2^{-1} & u_2^{-1} & u_2 & u_2 & u_2 & u_1^{-1} u_2 &  u_1 u_2^{-1}  &u_1 u_2^{-1}
\\
u_3^{-1} & u_1^{-1} u_2 & u_3^{-1} & u_2 & u_3^{-1} & u_1^{-1} u_2 & u_3 & u_2
\\
u_1^{-1} u_2 & u_2^{-1} & u_1^{-1} u_2 & u_2 & u_2^{-1} & u_1^{-1} u_2 & u_2^{-1} & u_1 u_2^{-1}
\emt.
\label{eq:8bPMR}
}
 The need for this further relaxation may  be related to the fact that $\{u_2,u^{-1}_2,u_1u_2^{-1},u_1^{-1}u_2\}$ and $\{u_1^2, u_1^{-2}\}$ are different orbits under the action of the Galois group $\Gal\big(\mbb{E}/\mbb{Q}(a)\big)$ (where $\mbb{E}$ and $a$ are as defined in ref.~\cite{AYAZ}).  The matrix satisfies Eqs.~\eqref{eq:M0} and \eqref{eq:Mabs} for any choice of $u_1$, $u_2$, $u_3$.  It is convenient to define phase angles $\psi$, $\phi$, $\theta$ by
\ea{
u_1 = e^{2i\psi}, \qquad u_2 = e^{i(\phi+\psi)}, \qquad u_3 = e^{i\theta}.
} 
Inserting these expressions into Eq.~\eqref{eq:MPropeq} 
and solving for $\phi$, $\theta$ in terms of $\psi$ one finds, after lengthy but straightforward manipulations, that the equations are satisfied if and only if $M$ is on one of the three curves
\ea{
M(e^{2i\psi},e^{i\psi}, 1), \qquad M(e^{2i\psi},e^{i\psi}f(\psi), g(\psi)), \qquad M(e^{2i\psi},e^{i\psi}f(\psi), (g(\psi))^{*})
}
where
\ea{
f(\psi) &= \frac{\sqrt{1+2\sin^2\psi + 4 \sin^4 \psi} + \sqrt{2}i\sin \psi}{1+2\sin^2\psi}
\\
g(\psi) &= \frac{(1-2\sin^2\psi) + 2\sqrt{2} i \sin \psi}{1+2\sin^2\psi}
}
and $\psi$ is arbitrary (to obtain these expressions it is enough to solve Eq.~\eqref{eq:MPropeq}  for $\mbf{p}=(0,1)^{\rm{T}}$, $(0,2)^{\rm{T}}$, $(0,4)^{\rm{T}}$; the remaining equations are then satisfied automatically).    The  curves intersect at the points $\psi =0$, $\pi$ and nowhere else. 
 
 \subsection{One-parameter family for $d=6$}Let $\Pi$ be the strongly centred exact fiducial on orbit $6a$ described in ref.~\cite{ACFW} (as explained in ref.~\cite{ACFW} this differs from  the exact fiducial described in ref.~\cite{AYAZ} as well as  the numerical and exact fiducials described in ref~\cite{Scott}).  The corresponding squared-phase matrix is
 \ea{
 M^{(0)} &= 
 \bmt
 1& u_1& u_3& 1& u_3^{-1}& u_1^{-1}
 \\
 u_1& u_1^{-1}& u_3^2& u_2& u_1  u_2  u_3^{-2}&  u_3^2
 \\
 u_3& u_3^{-2}& u_3^{-1} & u_1 u_2  u_3^{-2}& u_3^3& u_2
 \\
 1& u_1^{-1} u_2^{-1}  u_3^2& u_2^{-1}& 1& u_2& u_1 u_2  u_3^{-2}
 \\
 u_3^{-1}& u_2^{-1}& u_3^{-3} & u_1^{-1} u_2^{-1} u_3^{2} & u_3& u_3^2
 \\
 u_1^{-1}& u_3^{-2} & u_1^{-1}  u_2^{-1}  u_3^2& u_2^{-1}& u_3^{-2}& u_1
 \emt
 \label{eq:d6pmt}
 }
 where
 \ea{
 u_1 &= \sqrt{c_1} +i \sqrt{1-c_1},
 \\
 u_2 &=- \sqrt{c_2} + i \sqrt{1-c_2},
 \\
 u_3 &=  -\sqrt{c_3}+i\sqrt{1-c_3},
 }
 \ea{
 c_1 &= \frac{567-85\sqrt{21}}{168} + \frac{-63+17\sqrt{21}}{42} b + \frac{4(21-5\sqrt{21})}{21} b^2,
 \\
 c_2&= \frac{1743-349\sqrt{21}}{168} +\frac{-63+11\sqrt{21}}{42} b + \frac{-63+13\sqrt{21}}{21}b^2,
 \\
 c_3 &= \frac{11-\sqrt{21}}{8},
 \\
b&=  \rel\Big( \left(1+i\sqrt{7}\right)^{\frac{1}{3}}\Big).
\label{eq:6aPM}
 }
 Similarly to the situation with fiducial $8b$, the obvious ansatz, which consists in retaining the form of the matrix in Eq.~\eqref{eq:d6pmt} while allowing $u_1$, $u_2$, $u_3$ to vary independently, does not work.  Instead we have to make a further relaxation, by making the replacements
\eat{3}{
u_1^{\pm 1} &\to v_1^{\pm 1},
&\qquad
u_2^{\pm 1} &\to v_2^{\pm 1},
&\qquad
u_1^{\pm 1}u_2^{\pm 1}u_3^{\mp2} &\to v_3^{\mp 1}
\\
u_3^{\pm 1} &\to v_4^{\pm 1}
&\qquad
u_3^{\pm 2} &\to v_5^{\pm 1}
&\qquad
u_3^{\pm 3} & \to v_6^{\pm 1}
}
where $v_1$, \dots, $v_6$ are independent phases, giving us the ansatz
\ea{
M(v_1,v_2,v_3,v_4,v_5,v_6)&=
 \bmt
 1& v_1& v_4& 1& v_4^{-1}& v_1^{-1}
 \\
 v_1& v_1^{-1}& v_5 & v_2& v_3^{-1}&  v_5
 \\
 v_4& v_5^{-1}& v_4^{-1} & v_3^{-1}& v_6 & v_2
 \\
 1& v_3 & v_2^{-1}& 1& v_2&v_3^{-1}
 \\
 v_4^{-1}& v_2^{-1}& v_6^{-1} & v_3 & v_4& v_5
 \\
 v_1^{-1}& v_5^{-1} & v_3 & v_2^{-1}& v_5^{-1}& v_1
 \emt
}
 The need for this  relaxation may  be related to the fact that $\{u_1^{\pm 1}, u_2^{\pm 1} , u_1^{\pm 1}u_2^{\pm 1}u_3^{\mp 2}\}$,  $\{u_3^{\pm1}\}$, $\{u_3^{\pm 2}\}$, $\{ u_3^{\pm 3}\}$ are different orbits under the action of the Galois group $\Gal\big((\mbb{E}/\mbb{Q}(a)\big)$ (where $\mbb{E}$ and $a$ are as defined in ref.~\cite{AYAZ}) (c.f. our analogous remark concerning the family based on fiducial $8b $). Let 
 \ea{
 S(\mbf{p}) &=\sum_{\mbf{r}} \tau^{-2\la \mbf{p}, \mbf{r}\ra } M_{\mbf{p}-\mbf{r}} M_{\mbf{r}}
 }
We require that $S(\mbf{p}) = 36 \delta_{\mbf{p},\boldsymbol{0}}$ for all $\mbf{p}$.  Aside from the condition $S(\boldsymbol{0})=36$, which is trivial, each of these equations  coincides, either with one of   the seven  equations $S(\mbf{p}_j) = 0$, $j=1, 2, \dots, 7$, or with its conjugate,  where
\ea{
\mbf{p}_1 &= \bmt 2 \\ 4 \emt,  \qquad \mbf{p}_2 =\bmt 1 \\ 2 \emt,  \qquad \mbf{p}_3 = \bmt 0 \\ 2\emt ,\qquad \mbf{p}_4 = \bmt 0 \\ 3\emt, 
}
\ea{
\mbf{p}_5 &=\bmt 0 \\ 5\emt , \qquad  \mbf{p}_6 = \bmt 5 \\ 3 \emt ,\qquad \mbf{p}_7 = \bmt 3 \\ 5 \emt.
}
This system of equations is  more complicated than the ones considered earlier, and the obvious approach is to construct a Gr\"{o}bner basis.  This is in fact the procedure we originally adopted.  However,  to get a  solution in an acceptable amount of time using available resources one needs  to do some  pre-processing and it turns out that once that has been done it is actually quicker and easier to complete the calculation by hand.  We  indicate the main  steps.

We begin by observing that  cyclic permutations of the variables $v_1$,  $v_2$, $v_3$ leave $S(\mbf{p}_1)$, \dots, $S(\mbf{p}_4)$ invariant and cyclically permute $S(\mbf{p}_5)$, $S(\mbf{p}_6)$, $S(\mbf{p}_7)$ (a symmetry stemming  from the Galois symmetries of the original SIC).    This suggests that we work in terms of the symmetric polynomials
\ea{
s_1 = v_1 + v_2 + v_3, \qquad s_2 = v_1 v_2 + v_2 v_3 + v_3 v_1, \qquad s_3 = v_1v_2 v_3.
}
The fact that the equations are not invariant under odd permutations of $v_1$,  $v_2$, $v_3$ suggests that we also define
\ea{
s_4 &= i (v_1-v_2)(v_2-v_3)(v_3-v_1)
}
(where the additional factor of $i$ is introduced for later convenience).  If we  define
\ea{
E_j &= S(\mbf{p}_j)
}
for $j=1,\dots 4$, and
\ea{
E_5 &= S(\mbf{p}_5) +  S(\mbf{p}_6) + S(\mbf{p}_7),
\\
E_6&= \big(S(\mbf{p}_5)-S(\mbf{p}_6)\big)\big(S(\mbf{p}_6)-S(\mbf{p}_7)\big) +\big(S(\mbf{p}_6)-S(\mbf{p}_7)\big)\big(S(\mbf{p}_7)-S(\mbf{p}_5)\big)
\nn
&\hspace{7 cm}+\big(S(\mbf{p}_7)-S(\mbf{p}_5)\big)\big(S(\mbf{p}_5)-S(\mbf{p}_6)\big),
\\
E_7 &= \big(S(\mbf{p}_5)-S(\mbf{p}_6)\big)\big(S(\mbf{p}_6)-S(\mbf{p}_7)\big) \big(S(\mbf{p}_7)-S(\mbf{p}_1)\big),
}
then the equations $E_1=0$, \dots, $E_7=0$ are equivalent to the original set of  equations, and are completely expressible in terms of the seven variables $s_1, \dots, s_4, v_4, v_5, v_6$.

It is easily verified that the fact that the $v_j$ are phases implies
\ea{
|s_1| = |s_2|, \qquad |s_3| = 1, \qquad s_1 s_2 = |s_1 s_2| s_3, \qquad \frac{s_4}{s_3} \in \mbb{R}
}
meaning we can write
\ea{
s_1 = r e^{i\phi_1}, \qquad s_2 = r e^{i\phi_2}, \qquad s_3 = e^{i(\phi_1+\phi_2)}, \qquad s_4  = k e^{i(\phi_1+\phi_2)},
\label{eq:s1s2s3s4conds}
} 
for suitable $r, \phi_1, \phi_2, k\in \mbb{R}$ with $r$ non-negative.  The condition that the $s_j$ be expressible in this form is  sufficient as well as necessary for $v_1$, $v_2$, $v_3$ to be phases.  Indeed, suppose that   Eq.~\eqref{eq:s1s2s3s4conds} holds for some $r, \phi_1, \phi_2, k$  satisfying the stated conditions.  It is easily seen that  $v_j$ and $v^{*-1}_j$ are both roots of the polynomial $x^3-s_1 x^2+s_2 x -s_3$ for $j=1, 2, 3$,  which means that if  $v_1$, $v_2$, $v_3$ were not all phases  they would have to be a cyclic permutation of the three numbers 
\ea{
\mu e^{i\xi}, \qquad  \frac{e^{i\xi}}{\mu}, \qquad e^{i(\phi_1+\phi_2-2\xi)},
}
for some $\xi, \mu \in \mbb{R}$ such that $\mu > 0$ and $\mu \neq 1$.
But then
\ea{
k &=i \left(\mu-\frac{1}{\mu}\right)\left(\mu+\frac{1}{\mu} +2 \cos (3\xi -\phi_1 -\phi_2)\right)
}
would be pure imaginary.

Writing $v_5 = e^{i\theta_5}$, $v_6 = e^{i\theta_6}$ and taking the imaginary part of the equation $E_1+3 E_2=0$ gives
\ea{
\left( 3\cos \theta_5-\cos \theta_6 -2\right)\left(3 \sin \theta_5+\sin \theta_6\right) &= 0
\label{eq:firstFactors}
}
This potentially gives us two families of solutions, according to whether we set the first or second factor to zero.  We choose to impose the  condition $3\cos\theta_5 = \cos \theta_6 +2$, as this is the one satisfied by the original SIC\footnote{We neglect the other possibility, not because it is not potentially interesting, but simply for reasons of time.}.  The condition implies
\ea{
v_5 &= \frac{3+t^2 +i (-1)^{n_1}2^{\frac{3}{2}}  t \sqrt{t^2 -3}}{3(t^2-1)}
\\
v_6 &= \frac{5-t^2 +i(-1)^{n_1}2^{\frac{3}{2}}  \sqrt{t^2-3}}{t^2-1}
}
for some real parameter $t \in \left(-\infty,-\sqrt{3}\, \right] \cup \left[\sqrt{3},\infty\right)$ and integer $n_1 =0$ or $1$, and where we take the principal branch for the square root.
Taking the real part of $E_1=0$ and  imaginary part of $E_2=0$ gives, respectively,
\ea{
r&= \frac{t^2+15}{3(t^2-1)}, 
\\
k&= \frac{(-1)^{n_1}2^{\frac{5}{2}} (3-t) (t^2-3)^{\frac{3}{2}}}{9(t^2-1)^2},
}
where we take the principal branch for the fractional power.
The cases $t^2=3$ and $t^2>3$ need to be handled separately.  In the former case it is easily seen that the full solution is
\ea{
v_1 = v_2 =v_3 = v^{*}_4 = e^{i\phi}, \qquad v_5=v_6=1,
\label{eq:v14ts3}
}
for some arbitrary phase $e^{i\phi}$.  Turning to the second case, the equation $E_3 =0$ together with the assumption $t^2>3$ gives
\ea{
v_4 &= -\frac{(t^2+15)e^{-i\phi_1} }{9(t^2-1)}+\frac{(t^2+15)e^{i\phi_2}}{4(t^2-3)}
-\frac{(t^2+15)^2e^{2i\phi_1}}{36(t^2-3)(t^2-1)}.
\label{eq:v4expn}
}
Combining this equation with the real part of $E_2=0$ and taking account of the fact that $v_4$ is a phase gives, after lengthy but straightforward calculations, 
\ea{
e^{i\phi_1} &=\sigma^{n_4} \left(\frac{(e^{i\psi}p_1+p_2)\big( p_3 + i (-1)^{n_3}p_4 \big)}{p_5} \right)^{\frac{1}{3}}
\label{eq:f1}
\\
e^{i\phi_2} &= \sigma^{2n_4} e^{-i\psi} \left(\frac{(e^{i\psi}p_1+p_2)\big( p_3 + i(-1)^{n_3} p_4 \big)}{p_5} \right)^{\frac{2}{3}}
\label{eq:f2}
}
where 
\ea{
\sigma&=e^{\frac{2\pi i}{3}}, \qquad e^{i\psi} = p_6+i(-1)^{n_2} p_7,  \qquad n_2, n_3\in \{0,1\}, \qquad n_4\in \{0,1,2\}, 
\\
p_1 &= -9(t^2-1),
\\
p_2&= t^2+15,
\\
p_3 &=8(-207+108t-42t^2 -36t^3 +t^4),
\\
p_4 &= 24\sqrt{2}  (3-t)\left(-3+6t+t^2\right)\sqrt{t^2-3},
\\
p_5&=64(t^2-3)(171-108t-6t^2 +36t^3 +19t^4),
\\
p_6&=-\frac{1647-5103t+2403t^2+1053t^3-531t^4-621 t^4+65t^6+63t^7}{(t-1)(t^2+15)^3} 
\\
p_7&= \frac{2^{\frac{7}{2}} (t^2-3)^{\frac{3}{2}} \sqrt{(3-t)(1+t)(-837+1134t-135t^2-108t^3+45t^4+126t^5+31t^6}}{(t-1)(t^2+15)^3}
}
and where we take the principal branch for the fractional powers and  square roots.  Given the requirement that $t^2>3$ the necessary and sufficient condition for $e^{i\phi_1}$, $e^{i\phi_2}$ to be phases is that
\ea{
t \in J = [-t_0,-\sqrt{3})\cup(\sqrt{3},3]
}
where
\ea{
t_0 &= \frac{3}{31}\left(7-4\left(c+\frac{1}{c}\right) + 6\left(c^2+\frac{1}{c^2}\right)
\right)+\frac{2}{31}\sqrt{633+90\left(c+\frac{1}{c}\right) +144\left(c^2+\frac{1}{c^2}\right)}
}
with  $c= (2+\sqrt{3})^{\frac{1}{3}}$.  The roots of the cubic $x^3-s_1 x^2+s_2 x-s_3$ are
\ea{
\nu(j)&= \frac{1}{3}\left(s_1 +\sigma^j \Delta +\frac{B}{\sigma^j \Delta}\right), \qquad j=0,1,2
}
where
\ea{
\Delta &= \left(\frac{A + \sqrt{A^2 - 4B^3}}{2}\right)^{\frac{1}{3}},
\label{eq:DeltaDef}
\\
A&=  2s_1^3 -9s_1 s_2 +27 s_3,
\\
B&= s_1^2 -3 s_2 .
}
and where one is free to make any fixed choice of  branch  for the square and cube roots.  With the appropriate choice of branch for the square root we get the more explicit formula
\ea{
\Delta &= e^{i\phi_1} \left(  \frac{(15+t^2)^3}{27(t^2-1)^3} +e^{-i\psi} \left(\frac{-99-42t^2+13t^4}{(t^2-1)^2} + \frac{8\sqrt{6}(3-t)(t^2-3)^{\frac{3}{2}}}{3(t^2-1)^2} \right) \right)^{\frac{1}{3}}
}
where we now take the principal branch for the cube root.  For this choice of $\Delta$ one finds
\ea{
\frac{(v_1-v_2)(v_2-v_3)(v_3-v_1)}{(\nu(0)-\nu(1))(\nu(1)-\nu(2))(\nu(2)-\nu(0))} &= (-1)^{n_1}.
}
implying
\ea{
v_j &=\nu\big(n_5+(-1)^{n_1}( j-1)\big)
}
for $n_5 \in \{0,1,2\}$.  The fact that $r$, $\phi_1$, $\phi_2$, $k$ are real means, in view of the result proved earlier, that $v_1$, $v_2$, $v_3$ are phases.  

We derived this solution using the three conditions $E_1=E_2=E_3=0$.  The calculations  to show it satisfies all seven equations $E_j=0$, though tedious, present no essential difficulty.

Finally, let us note that it follows from Eq.~\eqref{eq:v14ts3} that the 144 curve segments corresponding to the 72 possible choices of  $n_1$, \dots, $n_5$ and 2 possible choices of $\sgn(t)$ are all connected at the points $t=\pm\sqrt{3}$.  

\subsection{Conclusion to this Section}
The results presented in this Section, combined with those in Appendix~\ref{Ap1}, suggest the conjecture that the squared-phase matrix for every SIC in dimension greater than 2 lies on a continuous curve.  It is possible one could confirm the conjecture in a few more cases using brute-force methods, such as we have employed here.  However, decisive progress would require new insights.

It is to be noted that what we have presented gives one-parameter families
of Gram matrices of ETFs. This is enough to guarantee existence, but for many
purposes it is preferable to know the vectors explicitly. In the case of $d = 4$ this
is quite easy to provide. We know~\cite{ABDF} that there is a subgroup of the
Weyl-Heisenberg group in dimension $d(d-2) = 8$ which, when acting on an $8b$ SIC
vector in this dimension, provides an ETF consisting of $d^2 = 16$ vectors sitting in a
$d(d-1)/2 = 6$ dimensional subspace of $\mathbb{C}^8$. Let us call this an ETF of type $(6, 16)$.
In this case it is not hard to find a one-parameter family of fiducial vectors in a
two dimensional subspace of $\mathbb{C}^8$, yielding a one-parameter family of $(6, 16)$ ETFs
as orbits under that subgroup of the Weyl-Heisenberg group. In fact this curve
is a geodesic in projective space, as is the one-parameter family of SIC vectors
in dimension $d = 3$. The resulting Gram matrices are equivalent to those arising
from the construction presented here. The situation appears to be considerably
more complicated for the non-affine families we obtained for $d = 8$ and $d = 6$, and
we have not made any progress in these cases.

Focussing on the family of $(6, 16)$ ETFs that we found in this way, we may
compare it with the Steiner ETF of the same type~\cite{Mixon}. The latter consists of sparse vectors, which is an advantage in some
applications. It can be shown to belong to a seven parameter family of (6, 16) ETFs~\cite{Turek}.
It is natural to ask if this family is connected
to our one-parameter family. The answer is that the two families do intersect, but
not at the SIC point---clearly not, since the restricted defect that we calculate in
the Appendix equals 1 there, and indeed this is the value taken by the restricted
defect at generic points on the curve. This observation answers a question raised
by Waldron~\cite{Waldron}.  Interestingly, Weyl-Heisenberg covariant Steiner ETFs were found recently~\cite{BodKing}.

 \section{Conclusion}
In this paper we have described a number of related structures associated to a given SIC:  the complex Hadamard matrix described in Corollary~\ref{cor:had}, the two ETFs described in Corollary~\ref{cor:ETF}, and, in the case of a WH SIC, the two STFFs described in Corollary~\ref{cor:condes}.   These structures were originally discovered as a result of the number-theoretic investigations described in ref.~\cite{ABDF}.  What is noteworthy here is that the squared overlap phases, whose importance is naturally suggested by the fact that a SIC is a projective $2$-design, also seem to be playing a vital role in the number-theoretic structures underlying SICs.  These connections between the number theory and geometry are intriguing, and they merit further investigation.  We also  described two sets of equations which are different from the standard defining equations of a SIC.  The fact that they have a different algebraic structure from the standard equations, and the fact that the first set of equations provides a relaxation of the SIC problem, means that they may be a potential source of new insight.  Finally, we showed that these structures belong to continuous families in three cases, and in Table~\ref{TableRD} of the appendix  we present  evidence that the same is true for  twenty three other cases for dimension $d\ge 4$.  It is already known that such families exist when $d=3$.  This suggests the conjecture that such families exist for every value of $d$ greater than 2.

%

\appendix
\section{Restricted Defect Calculations for $d=2$--$16$.}\label{Ap1}
Given a geometrical constellation of complex vectors, e.g. an equiangular tight frame (ETF) \cite{STDH07}, we might ask for the possibility to introduce non-trivial free parameters in order to generate a continuous family of solutions. Here, non-trivial is used in the sense that none of the free parameters can be absorbed by a rigid rotation acting over the entire constellation, or by global phases acting on vectors. The problem to determine existence and construction for such families has demonstrated to be highly non-trivial \cite{FMT11,Mixon,FJMP18}. Recently, the concept of restricted defect has been derived, which allows us to establish an upper bound for the maximal number of parameters allowed by a family of POVM having a prescribed geometrical structure \cite{Bruzda}. In particular, in the same work the authors have proven that known maximal sets of MUB (mutually unbiased bases) and SICs are isolated up to dimension 16. 

Let us briefly summarize how the method works. The Gram matrix $G$ of a tight frame composed by $N$ elements in dimension $d$ is proportional to a rank-$d$ projector. This implies that a suitable linear combination between $G$ and the identity operator produces a structured unitary matrix, i.e. a unitary matrix having prescribed absolute value of entries that only depend on the parameters $N$ and $d$. A continuous set of inequivalent tight frames is thus one-to-one connected with a continuous family of unistochastic matrices \cite{Turek}. 

To find the maximal family of tight frames arising from a given solution consists in finding the most general real antisymmetric matrix $R$ of size $N$ such that 
\begin{equation}\label{UN}
V_{j,k}(t)=U_{j,k}e^{itR_{j,k}},
\end{equation}
is a Hermitian unitary matrix, provided that $U=\mathbb{I}_N-\frac{2d}{N}G$ is associated to a given tight frame, e.g. SIC-POVM. The parameter $t$ in Eq.~\eqref{UN}) is introduced for convenience and it can be set to $t=1$ after applying our method. To calculate the exact number of free parameters we need to solve the following non-linear problem: \medskip

\begin{table}[h]
\begin{equation*}
\begin{array}{c|c||c|c}
\mbox{Fiducial state}&\mbox{Restricted defect}&\mbox{Fiducial state}&\mbox{Restricted defect}\\
\hline
2a  &0&11a&1\\
3a  &2 &11b&1\\
3b&4&11c&1\\
3c  &4&12a&1\\
4a &1&12b&40\\
5a&1&13a&1\\
6a&1&13b&1\\
7a&1&14a&1\\
7b&16&14b&1\\
8a &1&15a&226\\
8b&1&15b&226\\
8H&945&15c&226\\
9a&1&15d&424\\
9b&1&16a&1\\
10a&1&16b&1
\end{array}
\end{equation*}
\caption{Restricted defect for families of equiangular tight frames arising from known SICs in dimensions $d=2-16$.   Its value represents an upper bound for the maximal possible number of free parameters of a family containing the given particular solution. Note that there is a single isolated case ($d=2$). Also, for several cases the upper bound is 1. For unknown reasons, some ETF have a large (7b and 12b) or very large ($8H, 15a-15d$) defect. Aside from the label $8H$ (which refers to the Hoggar lines~\cite{Hoggar}) the SICs are labelled using the Scott-Grassl  scheme~\cite{Scott}.  The label $3a$ refers to an infinite family of unitarily inequivalent SICs which was sampled by varying the parameter $t$ in the parameterization of ref.~\cite{Marcus} in steps of $\pi/300$.}\label{TableRD}
\end{table}

\noindent \textbf{Problem }$\mathbf{\mathcal{P}_{NL}:}$ Find the most general matrix $V(t)$ of size $N$ having the form (\ref{UN}) such that
\begin{equation}\label{P1}
V(t)V(t)^{\dag}=V^2(t)=\mathbb{I}_N,
\end{equation}
This is a non-linear system of coupled equations depending on $t$ and on the $[N (N - 1)/2-z] - (N - 1)$ different variables $R_{ij}$, where $z$ is the number of zeros existing in the strictly upper triangular part of the matrix $U$. The resolution of Eq.~\eqref{P1} provides the most general family of structured tight frames. However, this is a highly complicated problem, in general. In order to simplify its resolution, we define the following linearized problem:\medskip

\noindent\textbf{Problem }$\mathbf{\mathcal{P}^{(2)}_L:}$ Find the most general matrix $V(t)$ of size $N$ such that
\begin{equation}\label{P2}
\lim_{t\rightarrow0}\frac{d}{dt}[V^2(t)]=0.
\end{equation}
Using (\ref{UN}), we can explicitly write Eq.(\ref{P2}) as
\begin{equation}\label{soe}
-2V_{k,k}(0)V_{k,j}(0)R_{j,k}+\sum_{l\neq j,k}V_{k,l}(0)V_{l,j}(0)(R_{k,l}-R_{j,l})=0,
\end{equation}
for $1\leq j<k\leq N$ and $1\leq l\leq N$, which is a linear system of equations in variables $R_{ij}$.
Note that $\mathcal{P}^{(2)}_L\subset\mathcal{P}^{(1)}_{NL}$, as Eq.(\ref{P2}) is a necessary condition to obtain Eq.~\eqref{P1}. For this reason, the dimension of the solution space of this linear system represents an upper bound for the maximal allowed number of free parameters in the tight frame, i.e. the restricted defect.

In Table \ref{TableRD}, we derive the restricted defect for the families of ETF arising from SICs in low dimensions, according to the results derived in Section~\ref{sec:associated1}. 

\section*{Acknowledgements}
Discussions with Gary McConnell played a crucial role at one stage of this work. We also thank Karol \.Zyczkowski for useful discussions, and two anonymous referees for some very helpful coments.. DG acknowledges Grant FONDECYT Iniciaci\'{o}n number 11180474, Chile.  This work was supported in part by the Australian Research Council through the Centre of Excellence in Engineered Quantum Systems CE170100009.

\end{document}